\newtheorem{theorem}{Theorem}[section]
\newtheorem{lemma}[theorem]{Lemma}
\begin{document}

\title{Tight Bounds for Active Self-Assembly\\Using an Insertion Primitive\thanks{An abstract version of this work has been published as~\cite{Malchik-2014a}.}}

\author{Benjamin Hescott\footnote{Tufts University, Department of Computer Science, \texttt{hescott@cs.tufts.edu}} 
\and Caleb Malchik\footnote{Tufts University, Department of Computer Science, \texttt{caleb.malchik@tufts.edu}}  
\and Andrew Winslow\footnote{Universit\'e Libre de Bruxelles, D\'{e}partement d'Informatique, \texttt{awinslow@ulb.ac.be}}
}

\date{}

\maketitle

\begin{abstract}
We prove two limits on the behavior of a model of self-assembling particles introduced by Dabby and Chen (SODA 2013), called \emph{insertion systems}, where monomers insert themselves into the middle of a growing linear polymer.
First, we prove that the expressive power of these systems is equal to context-free grammars, answering a question posed by Dabby and Chen.
Second, we prove that systems of $k$ monomer types can deterministically construct polymers of length $n = 2^{\Theta(k^{3/2})}$ in $O(\log^{5/3}(n))$ expected time, and that this is optimal in both the number of monomer types and expected time.
\end{abstract}

\section{Introduction}

In this work we study a theoretical model of \emph{algorithmic self-assembly}, in which simple particles aggregate in a distributed manner to carry out complex functionality.
Perhaps the the most well-studied theoretical model of algorithmic self-assembly is the \emph{abstract Tile Assembly Model (aTAM)} of Winfree~\cite{Winfree-1998a} consisting of square \emph{tiles} irreversibly attach to a growing polyomino-shaped assembly according to matching edge colors.
This model is capable of Turing-universal computation~\cite{Winfree-1998a}, self-simulation~\cite{Doty-2012b}, and efficient assembly of general (scaled) shapes~\cite{Soloveichik-2007a} and squares~\cite{Adleman-2001a,Rothemund-2000a}.
Despite this power, the model is incapable of assembling shapes efficiently; a single row of $n$ tiles requires $n$ tile types and $\Omega(n^2)$ expected assembly time, and any shape with $n$ tiles requires $\Omega(\sqrt{n})$ expected time~\cite{Adleman-2001a}, even if the shape is assembled non-deterministically~\cite{Chen-2012a}.

Such a limitation may not seem so significant, except that a wide range of biological systems form complex assemblies in time polylogarithmic in the assembly size, as noted in~\cite{Dabby-2013a,Woods-2013b}.
These biological systems are capable of such growth because their particles (e.g.\ living cells) \emph{actively} carry out geometric reconfiguration.
In the interest of both understanding naturally occurring biological systems and creating synthetic systems with additional capabilities, several models of \emph{active self-assembly} have been proposed recently.
These include the graph grammars of Klavins et al.~\cite{Klavins-2004b,Klavins-2004a}, the \emph{nubots} model of Woods et al.~\cite{Chen-2014a,Chen-2013a,Woods-2013b}, and the insertion systems of Dabby and Chen~\cite{Dabby-2013a}. 
Both graph grammars and nubots are capable of a topologically rich set of assemblies and reconfigurations, but rely on stateful particles forming complex bond arrangements.
In contrast, insertion systems consist of stateless particles forming a single chain of bonds.
Indeed, all insertion systems are captured as a special case of nubots in which a linear polymer is assembled via parallel insertion-like reconfigurations, as in Theorem 5.1 of~\cite{Woods-2013a}. 
The simplicity of insertion systems makes their implementation in matter a more immediately attainable goal; Dabby and Chen~\cite{Dabby-2013b,Dabby-2013a} describe a direct implementation of these systems in DNA. 

We are careful to make a distinction between \emph{active self-assembly}, where assemblies undergo reconfiguration, and \emph{active tile self-assembly}~\cite{Gautam-2013a,Hendricks-2013a,Jonoska-2014a,Jonoska-2014b,Keenan-2013b,Majumder-2008a,Padilla-2012a,Padilla-2014a}, where tile-based assemblies change their bond structure.
Active self-assembly enables exponential assembly rates by enabling insertion of new particles throughout the assembly, while active tile self-assembly does not, since the $\Omega(\sqrt{n})$ expected-time lower bound of Chen and Doty~\cite{Chen-2012a} still applies.

\section{Definitions}

Section~\ref{sec:grammar-defns} defines standard context-free grammars, as well as a special type called \emph{symbol-pair grammars}, used in Section~\ref{sec:expressive-power}.
Section~\ref{sec:is-defns} defines insertion systems, with a small number of modifications from the definitions given in~\cite{Dabby-2013a} designed to ease readability.
Section~\ref{sec:expressive-power-defn} formalizes the notion of expressive power used in~\cite{Dabby-2013a}.

\subsection{Grammars}
\label{sec:grammar-defns}

A \emph{context-free grammar} $\mathcal{G}$ is a 4-tuple $\mathcal{G} = (\Sigma, \Gamma, \Delta, S)$.
The sets $\Sigma$ and $\Gamma$ are the \emph{terminal} and \emph{non-terminal symbols} of the grammar.
The set $\Delta$ consists of \emph{production rules} or simply \emph{rules}, each of the form $L \rightarrow R_1 R_2 \cdots R_j$ with $L \in \Gamma$ and $R_i \in \Sigma \cup \Gamma$.
Finally, the symbol $S \in \Gamma$ is a special \emph{start symbol}.
The \emph{language of $\mathcal{G}$}, denoted $L(\mathcal{G})$, is the set of finite strings that can be \emph{derived} by starting with $S$, and repeatedly replacing a non-terminal symbol found on the left-hand side of some rule in $\Delta$ with the sequence of symbols on the right-hand side of the rule.
The \emph{size} of $\mathcal{G}$ is $|\Delta|$, the number of rules in $\mathcal{G}$.
If every rule in $\Delta$ is of the form $L \rightarrow R_1 R_2$ or $L \rightarrow t$, with $R_1 R_2 \in \Gamma$ and $t \in \Sigma$, then the grammar is said to be in \emph{Chomsky normal form}.

A \emph{symbol-pair grammar}, used in Section~\ref{sec:expressive-power}, is a context-free grammar in Chomsky normal form such that each non-terminal symbol is in fact a symbol pair $(a, d)$, and each production rule has the form $(a, d) \rightarrow (a, b) (c, d)$ or $(a, d) \rightarrow t$.

\subsection{Insertion systems}
\label{sec:is-defns}

Dabby and Chen~\cite{Dabby-2013b,Dabby-2013a} describe both a physical implementation and formal model of insertion systems.
We briefly review the physical implementation, then give formal definitions.  

\textbf{Physical implementation.}
Short strands of DNA, called \emph{monomers}, are bonded via complementary base sequences to form linear sequences of monomers called \emph{polymers}.
Additional monomers are \emph{inserted} into the gap between two adjacent monomers, called an \emph{insertion site}, by bonding to the adjacent monomers and breaking the existing bond between them via a strand displacement reaction (see Figure~\ref{fig:figure}).
Each insertion then creates two new insertion sites for additional monomers to be inserted, allowing \emph{construction} of arbitrarily long polymers.

\begin{figure}[ht]
\centering
\includegraphics[scale=0.9]{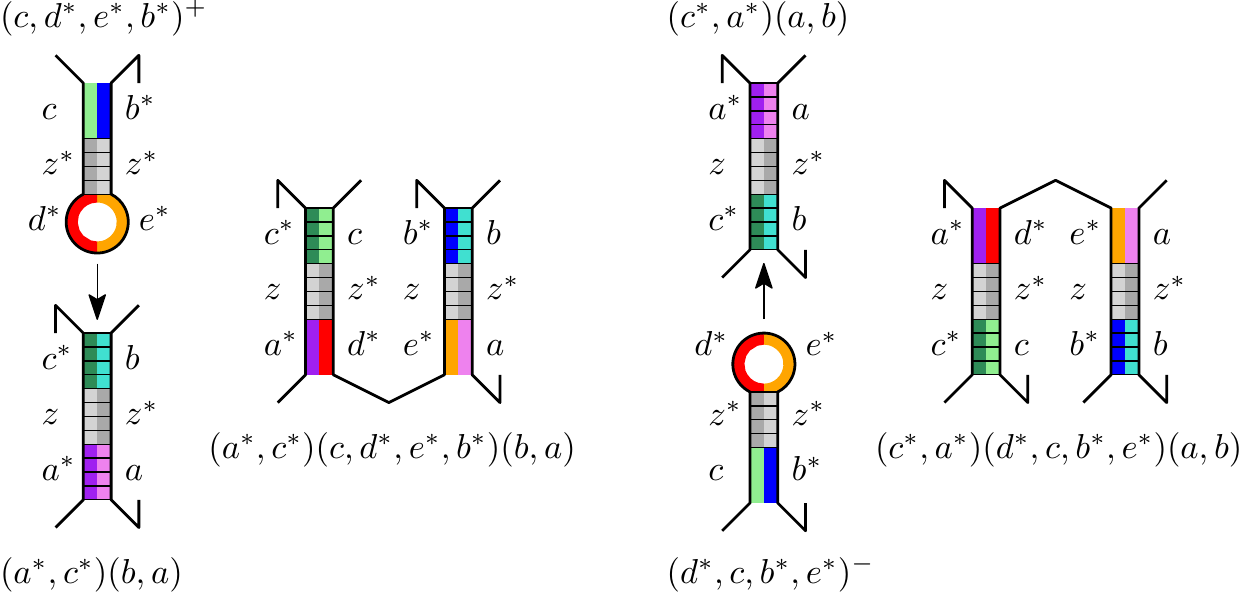}
\caption{The two types of insertions.
Each symbol denotes a DNA subsequence or its complement.
The directionality of DNA and hairpin design using generic subsequence symbols $z$, $z^*$ creates these distinct types.
This figure is loosely based on Figures~2 and~3 of~\cite{Dabby-2013a}.}
\label{fig:figure}
\end{figure}

Each monomer consists of four base sequences that form specific bonds, and only two of these can form bonds during insertion due to the monomer's hairpin design.
This design gives each insertion site or monomer one of two \emph{signs} such that a monomer can only be inserted into a site with identical sign.

\textbf{Formal model.}
An \emph{insertion system} $\mathcal{S}$ is a 4-tuple $\mathcal{S} = (\Sigma, \Delta, Q, R)$.
The first element, $\Sigma$, is a set of symbols.
Each symbol $s \in \Sigma$ has a \emph{complement} $s^*$.
We denote the complement of a symbol $s$ as $\overline{s}$, i.e. $\overline{s} = s^*$ and $\overline{s^*} = s$.

The set $\Delta$ is a set of \emph{monomer types}, each assigned a \emph{concentration}.
Each monomer is specified by a signed quadruple $(a, b, c, d)^+$ or $(a, b, c, d)^-$, where $a, b, c, d \in \Sigma \cup \{s^* : s \in \Sigma\}$, and is \emph{positive} or \emph{negative} according to its sign.
The concentration of each monomer type is a real number between~0 and~1, and the sum of call concentrations is at most~1.

The two symbols $Q = (a, b)$ and $R = (c, d)$ are special two-symbol monomers that together form the \emph{initiator} of $\mathcal{S}$.
It is required that either $\overline{a} = d$ or $\overline{b} = c$.
The \emph{size} of $\mathcal{S}$ is $|\Delta|$, the number of monomer types in $\mathcal{S}$.

A \emph{polymer} is a sequence of monomers $Q m_1 m_2 \dots m_n R$ where $m_i \in \Delta$ such that for each pair of adjacent monomers $(w, x, a, b) (c, d, y, z)$, either $\overline{a} = d$ or $\overline{b} = c$.\footnote{For readability, the signs of monomers belonging to a polymer are omitted.}
The \emph{length} of a polymer is the number of monomers it contains (including $Q$ and $R$).
The gap between every pair of adjacent monomers $(w, x, a, b) (c, d, y, z)$ in a polymer is an \emph{insertion site}, written $(a, b) (c, d)$.
Monomers can be \emph{inserted} into an insertion site $(a, b) (c, d)$ according to the following rules (seen in Figure~\ref{fig:figure}):

\begin{enumerate}
\item If $\overline{a} = d$ and $\overline{b} \neq c$, then any monomer $(\overline{b}, e, f, \overline{c})^+$ can be inserted.
\item If $\overline{a} \neq d$ and $\overline{b} = c$, then any monomer $(e, \overline{a}, \overline{d}, f)^-$ can be inserted.\footnote{In~\cite{Dabby-2013a}, this rule is described as a monomer $(\overline{d}, f, e, \overline{a})^-$ that is inserted into the polymer as $(e, \overline{a}, \overline{d}, f)$.}
\end{enumerate}

A \emph{positive} or \emph{negative} insertion site accepts only positive or negative monomers, respectively.
A \emph{dead} insertion site accepts no monomers and has the form $(a, b) (\overline{b}, \overline{a})$.
An \emph{insertion sequence} is a sequence of insertions, each specified by the site and monomer types, such that each site is created by the previous insertion.

A monomer is inserted after time $t$, where $t$ is an exponential random variable with rate equal to the concentration of the monomer type.
The set of all polymers \emph{constructed} by an insertion system is recursively defined as any polymer constructed by inserting a monomer into a polymer constructed by the system, beginning with the initiator.
Note that the insertion rules guarantee by induction that for every insertion site $(a, b) (c, d)$, either $\overline{a} = d$ or $\overline{b} = c$. 

We say that a polymer is \emph{terminal} if no monomer can be inserted into any insertion site in the polymer, and that an insertion system \emph{deterministically constructs} a polymer $P$ (i.e. is \emph{deterministic}) if every polymer constructed by the system is either $P$ or is non-terminal and has length less than that of $P$ (i.e. can become $P$).

The \emph{string representation} of a polymer is the sequence of symbols found on the polymer from left to right, e.g. $(a, b) (b^*, a, d, c) (c^*, a)$ has string representation $abb^*adcc^*a$.
We call the set of string representations of all terminal polymers of an insertion system $\mathcal{S}$ the \emph{language} of $\mathcal{S}$, denoted $L(\mathcal{S})$. 


\subsection{Expressive power}
\label{sec:expressive-power-defn}

Intuitively, a system \emph{expresses} another if the terminal polymers or strings created by the system ``look'' like the terminal polymers or strings created by the other system. 
In the simplest instance, a symbol-pair grammar $\mathcal{G}'$ is said to \emph{express} a context-free grammar $\mathcal{G}$ if $L(\mathcal{G}') = L(\mathcal{G})$.
Similarly, a grammar $\mathcal{G}$ is said to \emph{express} an insertion system $\mathcal{S}$ if $L(\mathcal{S}) = L(\mathcal{G})$, i.e. if the set of string representations of the terminal polymers of $\mathcal{S}$ equals the language of $\mathcal{G}$. 

An insertion system $\mathcal{S} = (\Sigma', \Delta', Q', R')$ is said to express a grammar $\mathcal{G} = (\Sigma, \Gamma, \Delta, S)$ if there exists a function $g : \Sigma' \cup \{s^* : s \in \Sigma'\} \rightarrow \Sigma \cup \{\varepsilon\}$ and integer $\kappa$ such that 
\begin{enumerate}
\item $\{g(s_1') g(s_2') \dots g(s_n') : s_1' s_2' \dots s_n' \in L(\mathcal{S})\} = L(\mathcal{G})$.
\item No $\kappa$ consecutive symbols of a string in $L(S)$ are mapped to $\varepsilon$ by $g$.
\end{enumerate}

The string representations of polymers have both complementary symbol and length requirements that imply they are unable to capture even simple languages, e.g. $\{aa \dots a\}$, despite intuition and claims to the contrary, e.g. Theorem 3.2 of~\cite{Dabby-2013a} that claims insertion systems express all regular languages.
Allowing $g$ to output $\varepsilon$ enables locally ``cleaning up'' string representations to eliminate complementary pairs and other debris, while $\kappa$ ensures there is a limit on the amount that can be ``swept under the rug'' locally.
A feasible stricter definition could instead use a function $g: \Delta' \rightarrow \Sigma$ (monomer types of $\mathcal{S}$ to terminal symbols of $\mathcal{S}$); it is open whether the results presented here would hold under such a definition.

\section{The Expressive Power of Insertion Systems}
\label{sec:expressive-power}

Dabby and Chen proved that any insertion system has a context-free grammar expressing it.
They construct such a grammar by creating a non-terminal for every possible insertion site 
and a production rule for every monomer type insertable into the site.
For instance, the insertion site $(a,b)(c^*,a^*)$ and monomer type $(b^*, d^*, e, c)^+$ induce non-terminal symbol $A_{(a, b)(c^*, a^*)}$ and production rule $A_{(a, b)(c^*, a^*)} \rightarrow A_{(a,b)(b^*, d^*)} A_{(e, c)(c^*,a^*)}$.
Here we give a reduction in the other direction, resolving in the affirmative the question posed by Dabby and Chen of whether context-free grammars and insertion systems have the same expressive power:

\begin{theorem}
\label{thm:IS-express-CFG}
For every context-free grammar $G$, there exists an insertion system that expresses $G$.
\end{theorem}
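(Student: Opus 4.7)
My approach will be a two-stage translation: first convert the given CFG $G$ into a symbol-pair grammar $G'$ with the same language, then realize $G'$ directly as an insertion system via a natural correspondence between pair non-terminals and insertion sites. A cleanup map $g$ will erase scaffolding symbols while preserving the terminals of $G$.

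For the first stage, I would put $G$ in Chomsky normal form and then build a symbol-pair grammar $G'$ that expresses $L(G)$. The pair non-terminals are designed to track, for each CNF non-terminal $A$, the possible pairs $(a,d)$ of leftmost and rightmost terminals of strings derivable from $A$. The rigid inheritance rule of the symbol-pair format---$(a,d)\to(a,b)(c,d)$ forces the parent's first component to be inherited by the left child and similarly on the right---mirrors precisely the propagation of the leftmost and rightmost terminals through a CNF parse tree. I would include one pair-rule for each CNF rule $A\to BC$ and each consistent choice of interior boundary symbols, augmenting the pair alphabet with non-terminal identifiers where needed to prevent distinct non-terminals sharing the same boundary pair from producing spurious combined derivations.

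For the second stage, fix a marker symbol $p$ and encode each pair non-terminal $(a,d)$ of $G'$ as the negative insertion site $(a,p)(\overline{p},d)$. A binary rule $(a,d)\to(a,b)(c,d)$ is simulated by three insertions: a negative ``split'' monomer divides the site into two positive intermediate sites, and then two positive ``cleanup'' monomers restore the canonical negative form, yielding the child sites for $(a,b)$ and $(c,d)$ and leaving behind two dead dummy sites. A terminal rule $(a,d)\to t$ is simulated by an analogous three-monomer block that kills the site while depositing a symbol corresponding to $t$ in the polymer. Setting $g$ to map marker and boundary symbols to $\varepsilon$ and each terminal symbol of $G$ to itself makes the language correspondence immediate.

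To verify the bounded-$\kappa$ condition, I would show that in any completed polymer the $\varepsilon$-mapped scaffolding strictly between two adjacent terminals consists only of the two terminals' own scaffolding plus the middle-scaffolding block of their least common ancestor in the $G'$-parse tree; the middle-scaffolding blocks of all other ancestors lie strictly outside this gap, so the number of consecutive $\varepsilon$-mapped symbols is bounded by a constant depending only on the construction. I expect the grammar-conversion step to be the main obstacle: reconciling ``each original non-terminal's derivation must be uniquely identifiable'' with the symbol-pair grammar's rigid boundary inheritance calls for a careful compound-alphabet encoding to avoid overgeneration.
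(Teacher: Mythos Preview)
Your two-stage decomposition---first reduce an arbitrary CFG to a symbol-pair grammar, then realize the symbol-pair grammar as an insertion system---is exactly the route the paper takes (their Lemmas~3.2 and~3.3). Your stage~2 is also essentially the paper's construction: they too encode a pair non-terminal as a site with a fixed marker symbol and simulate each binary rule by a short insertion sequence (one ``split'' monomer followed by two flanking monomers), with a separate monomer family for terminal rules; your $\kappa$ argument, though phrased in terms of least common ancestors, reaches the same constant bound the paper gets ($\kappa=16$) because the left and right boundaries of a site are inherited verbatim by the children, so no ancestor scaffolding accumulates between adjacent terminals.

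The gap is in stage~1. Tracking leftmost/rightmost derived terminals is compatible with the inheritance pattern $(a,d)\to(a,b)(c,d)$, but it does not identify the underlying CNF non-terminal: distinct non-terminals $A,A'$ can share a boundary pair $(a,d)$, and once the pair grammar is at $(a,d)$ it cannot tell which of $A,A'$ it is simulating, so it will mix their rules and overgenerate. You see this and propose ``augmenting the pair alphabet with non-terminal identifiers,'' but that is precisely the crux, and any identifier placed in the first (resp.\ second) component is inherited unchanged down the entire left (resp.\ right) spine of the parse tree, so it cannot directly record the \emph{current} non-terminal. What is needed is an encoding in which the pair $(a,d)$ jointly determines the non-terminal while each component separately is free. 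The paper's solution is to index the CNF non-terminals as $A_0,\dots,A_{n-1}$ and let $(a,d)$ stand for $A_{(a+d)\bmod n}$; then a rule $A_i\to A_jA_k$ becomes, for every $a$, the pair rule $(a,d)\to(a,b)(c,d)$ with $d\equiv i-a$, $b\equiv j-a$, $c\equiv k-d\pmod n$. This modular trick is the missing idea in your outline; once you have it, the rest of your plan goes through as written.
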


The primary difficulty in proving Theorem~\ref{thm:IS-express-CFG} lies in developing a way to simulate the ``complete'' replacement that occurs during derivation with the ``incomplete'' replacement that occurs at an insertion site during insertion.
For instance, $bcAbc \Rightarrow bcDDbc$ via a production rule $A \rightarrow DD$ and $A$ is completely replaced by $DD$.
On the other hand, inserting a monomer $(b^*, d, d, c)^+$ into a site $(a, b) (c^*, a^*)$ yields the consecutive sites $(a, b) (b^*, d)$ and $(d, c) (c^*, a^*)$, with $(a, b) (c^*, a^*)$ only partially replaced -- the left side of the first site and the right side of second site together form the initial site.
This behavior constrains how replacement can be captured by insertion sites, and the $\kappa$ parameter of the definition of expression (Section~\ref{sec:expressive-power-defn}) prevents eliminating the issue via additional insertions.

We overcome this difficulty by proving Theorem~\ref{thm:IS-express-CFG} in two steps. 
First, we prove that symbol-pair grammars, a constrained type of grammar with incomplete replacements, are able to express context-free grammars (Lemma~\ref{lem:PG-express-CFG}).
Second, we prove symbol-pair grammars can be expressed by insertion systems (Lemma~\ref{lem:IS-express-PG}).

\begin{lemma}
\label{lem:PG-express-CFG}
For every context-free grammar $\mathcal{G}$, there exists a symbol-pair grammar that expresses $\mathcal{G}$.
\end{lemma}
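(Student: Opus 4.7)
The plan is to explicitly construct, for any context-free grammar $\mathcal{G}$ put into Chomsky normal form, a symbol-pair grammar $\mathcal{G}'$ whose non-terminals are labeled pairs that encode both the current non-terminal of $\mathcal{G}$ being derived and enough of its structural context in the derivation tree for the symbol-pair rule format to apply. First I would assume without loss of generality that $\mathcal{G} = (\Sigma,\Gamma,\Delta,S)$ is already in CNF (standard transformations preserve the language), then fix a label alphabet $\Lambda$ containing $\Gamma$ together with auxiliary tags and a sentinel, take the non-terminals of $\mathcal{G}'$ to be all pairs $(a,d)\in\Lambda\times\Lambda$, and designate a distinguished pair as the start symbol.

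Next I would add symbol-pair rules that simulate each CNF rule $A\to BC$: whenever $(a,d)$ ``represents'' $A$, include $(a,d)\to(a,B')(C',d)$ where the inner symbols $B'$ and $C'$ are tagged versions of $B$ and $C$ chosen so that the left child $(a,B')$ is forced to represent $B$ and the right child $(C',d)$ is forced to represent $C$; for each terminal rule $A\to t$ I would analogously add $(a,d)\to t$ at every pair representing $A$. The forward containment $L(\mathcal{G})\subseteq L(\mathcal{G}')$ then follows by induction on the depth of a $\mathcal{G}$-derivation tree: label each internal node with non-terminal $A$ by the pair whose inner coordinate encodes $A$ and whose outer coordinate is inherited from the parent's split, and observe that the resulting splits are instances of the symbol-pair rules just defined.

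The hard part will be the reverse containment $L(\mathcal{G}')\subseteq L(\mathcal{G})$. The obstacle is that, because the outer coordinate in a rule $(a,d)\to(a,b)(c,d)$ is inherited rather than set, a pair of the form $(X,Y)$ with $X,Y\in\Gamma$ can in principle be reached both as a left child (where the intended non-terminal is $Y$) and as a right child (where it is $X$), so naively assigning both interpretations' rules to $(X,Y)$ would let $\mathcal{G}'$ derive strings not in $L(\mathcal{G})$. To eliminate this ambiguity I would choose $\Lambda$ so that the inner labels used for left children lie in a sub-alphabet disjoint from the inner labels used for right children --- for instance by tagging each label with $L$ or $R$ according to whether it was introduced by the left or right half of a split --- so that every reachable pair has a single well-defined interpretation. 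Once the tagging is in place, a structural induction on $\mathcal{G}'$-derivation trees reads off a matching $\mathcal{G}$-derivation at each split, yielding $L(\mathcal{G}')=L(\mathcal{G})$.
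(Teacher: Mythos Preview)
Your plan has a genuine gap at exactly the point you flag as ``the hard part.'' The $L/R$ tagging you propose does not disambiguate the interpretation of a pair, because a reachable pair can simultaneously carry an $R$-tagged first coordinate and an $L$-tagged second coordinate. Concretely, take a CNF grammar containing rules $S\to AB$, $A\to CB$, $B\to AD$ (plus terminal rules). From the start pair you reach $(s_1,A^L)$, and applying $A\to CB$ there produces the right child $(B^R,A^L)$, which by your convention represents $B$. Separately you reach $(B^R,s_2)$, and applying $B\to AD$ there produces the left child $(B^R,A^L)$, which by your convention represents $A$. The \emph{same} pair $(B^R,A^L)$ is reached in both roles, so whatever rule set you attach to it, $\mathcal{G}'$ either fails to simulate one of the two branches (losing $L(\mathcal{G})\subseteq L(\mathcal{G}')$) or allows mixing $A$-rules and $B$-rules at this node (losing $L(\mathcal{G}')\subseteq L(\mathcal{G})$). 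Disjointness of the $L$-tagged and $R$-tagged sub-alphabets does nothing here, since the two tags live in different coordinates of the pair.

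The underlying issue is that you are trying to store the represented non-terminal in a \emph{single} coordinate, but which coordinate that is depends on whether the node is a left or right child, and that information is not recoverable from the pair itself once both coordinates have been overwritten. The paper sidesteps this by making the interpretation \emph{symmetric} in the two coordinates: number the non-terminals $A_0,\dots,A_{n-1}$, take pairs $(a,d)$ with $a,d\in\{0,\dots,n-1\}$, and declare that $(a,d)$ always represents $A_{(a+d)\bmod n}$. A rule $A_i\to A_jA_k$ then becomes $(a,d)\to(a,b)(c,d)$ with $b=(j-a)\bmod n$ and $c=(k-d)\bmod n$, which is well-defined regardless of how $(a,d)$ was reached. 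The bijection between partial derivations of $\mathcal{G}$ and of $\mathcal{G}'$ is then immediate. If you want to rescue your approach, you need some device with this symmetry property; the $L/R$ tag is not it.
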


\begin{proof}
Let $\mathcal{G} = (\Sigma, \Gamma, \Delta, S)$. 
Let $n = |\Gamma|$.
Start by putting $\mathcal{G}$ into Chomsky normal form and then relabeling the non-terminals of $\mathcal{G}$ to $A_0, A_1, \dots, A_{n-1}$, with $S = A_0$.

Now we define a symbol-pair grammar $\mathcal{G}' = (\Sigma', \Gamma', \Delta', S')$ such that $L(\mathcal{G}') = L(\mathcal{G})$.
Let $\Sigma' = \Sigma$ and $\Gamma' = \{(a, d) : 0 \leq a,d < n \}$; we treat the symbols in the pairs of $\Gamma'$ as both symbols and integers.

For each production rule $A_i \rightarrow A_j A_k$ in $\Delta$, add to $\Delta'$ the set of rules $(a, d) \rightarrow (a, b) (c, d)$, with $0 \leq a < n$, $d = (i - a) \bmod n$, $b = (j - a) \bmod n$, and $c = (k - d) \bmod n$.
For each production rule $A_i \rightarrow t$ in $\Delta$, add to $\Delta'$ the set of rules $(a, d) \rightarrow t$, with $0 \leq a < n$ and $d = (i - a) \bmod n$.
Let $S' = (0, 0)$.

We claim that a partial derivation $P'$ of $\mathcal{G}'$ exists if and only if the partial derivation $P$ obtained by replacing each non-terminal $(a, d)$ in $P'$ with $A_{(a + d) \bmod n}$ is a partial derivation of $\mathcal{G}$.
By construction, a rule $(a, d) \rightarrow (a, b) (c, d)$ is in $\Delta'$ if and only if the rule $A_{(a + d) \bmod n} \rightarrow A_{(a + b) \bmod n} A_{(c + d) \bmod n}$ is in $\Delta$.
Similarly, a rule $(a, d) \rightarrow t$ is in $\Delta'$ if and only if the rule $A_{(a + d) \bmod n} \rightarrow r$ is in $\Delta$.
Also, $S' = (0, 0)$ and $S = A_{(0 + 0) \bmod n}$.
So the claim holds by induction.

Since the set of all partial derivations of $P'$ are equal to those of $P$, the completed derivations are as well and $L(\mathcal{S}') = L(\mathcal{S})$. 
So $\mathcal{G}'$ expresses $\mathcal{G}$.
\end{proof}

\begin{lemma}
\label{lem:IS-express-PG}
For every symbol-pair grammar $\mathcal{G}$, there exists an insertion system that expresses $\mathcal{G}$. 
\end{lemma}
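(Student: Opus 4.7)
My plan is to construct, for a symbol-pair grammar $\mathcal{G} = (\Sigma, \Gamma, \Delta, S')$, an insertion system $\mathcal{S}$ whose terminal polymers simulate the derivation trees of $\mathcal{G}$. I will encode each non-terminal pair $(a, d) \in \Gamma$ as the insertion site $(\alpha_a, \eta)(\overline{\eta}, \delta_d)$, where $\{\alpha_0, \ldots, \alpha_{n-1}\}$, $\{\delta_0, \ldots, \delta_{n-1}\}$, and $\eta$ are fresh symbols chosen so that no $\alpha_a$ is complementary to any $\delta_d$ (so no such site is dead). The initiator will be $Q = (\alpha_0, \eta)$, $R = (\overline{\eta}, \delta_0)$, encoding $S' = (0,0)$.

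The principal obstacle is that site types alternate: every insertion into a rule-2 site (inner-complementary) produces two rule-1 sites (outer-complementary) and vice versa, so no single insertion can carry one non-terminal-encoding site to two new non-terminal sites. I will instead simulate each grammar rule with a three-monomer sequence. For a non-terminal rule $r : (a,d) \to (a,b)(c,d)$, introduce rule-specific symbols $\epsilon_r^L, \epsilon_r^R$ and the monomers
\[
M_r^1 = (\epsilon_r^L, \overline{\alpha_a}, \overline{\delta_d}, \epsilon_r^R)^-, \quad M_r^{2L} = (\overline{\eta}, \delta_b, \alpha_a, \overline{\epsilon_r^L})^+, \quad M_r^{2R} = (\overline{\epsilon_r^R}, \delta_d, \alpha_c, \eta)^+.
\]
Inserting $M_r^1$ into the $(a,d)$-encoding site produces two rule-1 sites tagged by $\epsilon_r^L$ and $\epsilon_r^R$; because those symbols occur only in rule $r$'s monomers, only $M_r^{2L}$ and $M_r^{2R}$ can then attach, and they yield the intended new non-terminal sites for $(a,b)$ and $(c,d)$ together with two dead ``debris'' sites sandwiched between them. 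For a terminal rule $r : (a,d) \to t$, I will define the three monomers analogously but tune the middle entries of $M_r^{2L}$ and $M_r^{2R}$ so that all four resulting sites are dead, sealing off the leaf.

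Correctness will follow from two observations. First, the maximal insertion sequences of $\mathcal{S}$ are in bijection with derivation trees of $\mathcal{G}$: at each non-terminal site the choice of which $M_r^1$ to insert mirrors the grammar's rule choice at that non-terminal, and each follow-up monomer is uniquely forced by the unique $\epsilon$-marker left behind. Second, I will set $g(\epsilon_r^L) = t$ for each terminal rule $r : (a,d) \to t$ and $g(x) = \varepsilon$ for every other symbol; scanning the polymer left-to-right then amounts to an in-order traversal of the derivation tree, emitting each leaf's terminal exactly once. Because $\mathcal{G}$ is in Chomsky normal form the derivation tree is binary, so between any two consecutive leaves in in-order only the triple of monomers for their lowest common ancestor sits in the polymer; this bounds the number of consecutive $\varepsilon$-mapped symbols by a small constant, and taking, say, $\kappa = 24$ will suffice to satisfy both clauses of the expression definition.
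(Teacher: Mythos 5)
Your construction is correct and takes essentially the same approach as the paper: encode non-terminals as insertion sites, simulate each rule application with a three-monomer insertion sequence whose follow-up insertions are forced (you do this with rule-specific marker symbols $\epsilon_r^L, \epsilon_r^R$; the paper reuses auxiliary monomers keyed only by the symbol pair components), and map all ``debris'' symbols to $\varepsilon$ via $g$ with a constant $\kappa$. The only cosmetic differences are that you start from a negative site and insert a negative monomer followed by two positives, while the paper starts from a positive site and inserts a positive followed by two negatives, and that you seal terminal-rule leaves by tuning the follow-up monomers to create dead sites, while the paper introduces a fourth monomer family for this purpose.
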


\begin{proof}
Let $\mathcal{G} = (\Sigma, \Gamma, \Delta, S)$.
The symbol-pair grammar $\mathcal{G}$ is expressed by an insertion system $\mathcal{S} = (\Sigma', \Delta', Q', R')$ that we now define.
Let $\Sigma' = \{s_a, s_b : (a, b) \in \Gamma\} \cup \{u, x\} \cup \Sigma$.
Let $\Delta' = \Delta_1' \cup \Delta_2' \cup \Delta_3' \cup \Delta_4'$, where
\begin{align*}
\Delta_1' &= \{(s_b, u^*, s_b^*, x)^- : (a, d) \rightarrow (a, b) (c, d) \in \Delta \}\\
\Delta_2' &= \{(s_a^*, s_b, s_c^*, s_d^*)^+ : (a, d) \rightarrow (a, b) (c, d) \in \Delta \}\\
\Delta_3' &= \{(x, s_c, u, s_c)^- : (a, d) \rightarrow (a, b) (c, d) \in \Delta \}\\
\Delta_4' &= \{(s_a^*, t, x, s_d^*)^+ : (a, d) \rightarrow t \in \Delta \}
\end{align*}
Let $Q' = (u, a)$ and $R' = (b, u^*)$, where $S = (a, b)$.

For instance, the following insertions simulate applying the production rule $(0, 0) \rightarrow (0, 1) (2, 0)$ to $(0, 0)$, where $\diamond$ denotes the available insertion sites and bold the inserted monomer:
\begin{center}
$\begin{array}{c}
(u, s_0) \diamond (s_0, u^*)\\ 
(u, s_0) \diamond \bm{(s_0^*, s_1, s_2^*, s_0^*)} \diamond (s_0, u^*)\\
(u, s_0) \diamond \bm{(s_1, u^*, s_1^*, x)} (s_0^*, s_1, s_2^*, s_0^*) \diamond (s_0, u^*)\\
(u, s_0) \diamond (s_1, u^*, s_1^*, x) (s_0^*, s_1, s_2^*, s_0^*) \bm{(x, s_2, u, s_2)} \diamond (s_0, u^*)\\
(u, s_0) \diamond (s_1, u^*) \dots (u, s_2) \diamond (s_0, u^*)\\
\end{array}$
\end{center}

The subsequent application of production rules $(0, 1) \rightarrow p$ $(2, 0) \rightarrow q$ to the string $(0, 1) (2, 0)$ are simulated by the following insertions: 
\begin{center}
$\begin{array}{c}
(u, s_0) \diamond (s_1, u^*) \dots (u, s_2) \diamond (s_0, u^*)\\
(u, s_0) \bm{(s_0^*, p, x, s_1^*)} (s_1, u^*) \dots (u, s_2) \diamond (s_0, u^*)\\
(u, s_0) (s_0^*, p, x, s_1^*) (s_1, u^*) \dots (u, s_2) \bm{(s_2^*, q, x, s_0^*)} (s_0, u^*)\\
(u, s_0) (s_0^*, p, x, s_1^*) \dots (s_2^*, q, x, s_0^*) (s_0, u^*)\\
\end{array}$
\end{center}

\textbf{Insertion types.} First, it is proved that for any polymer constructed by $\mathcal{S}$, only three types of insertions of a monomer $m_2$ between two adjacent monomers $m_1 m_3$ are possible:
\begin{enumerate}
\item $m_1 \in \Delta_2'$, $m_2 \in \Delta_3'$, $m_3 \in \Delta_1'$.
\item $m_1 \in \Delta_3'$, $m_2 \in \Delta_2' \cup \Delta_4'$, $m_3 \in \Delta_1'$.
\item $m_1 \in \Delta_3'$, $m_2 \in \Delta_1'$, $m_3 \in \Delta_2'$.
\end{enumerate}
Moreover, for every adjacent $m_1 m_3$ pair satisfying one of these conditions, an insertion of some type $m_2$ from the specified set is possible.

Consider each possible combination of $m_1 \in \Delta_i'$ and $m_3 \in \Delta_j'$, respectively, with $i, j \in \{1, 2, 3, 4\}$.
Observe that for an insertion to occur at insertion site $(a, b) (c, d)$, the symbols $\overline{a}$, $\overline{b}$, $\overline{c}$, and $\overline{d}$ must each occur on some monomer.
Then since $x^*$ and $t^*$ do not appear on any monomers, any $i, j$ with $i \in \{1, 4\}$ or $j \in \{3, 4\}$ cannot occur.
This leaves monomer pairs $(\Delta_i', \Delta_j')$ with $(i, j) \in \{(2, 1), (2, 2), (3, 1), (3, 2)\}$.

Insertion sites between $(\Delta_2', \Delta_1')$ pairs have the form $(s_c^*, s_d^*) (s_d, u^*)$, so an inserted monomer must have the form $(\underline{~~}, s_c, u, \underline{~~})^-$ and is in $\Delta_3'$.
An insertion site $(s_c^*, s_d^*) (s_d, u^*)$ implies a rule of the form $(a, d) \rightarrow (a, b) (c, d)$ in $\Delta$, so there exists a monomer $(x, s_c, u, s_c^*)^- \in \Delta_3'$ that can be inserted.

Insertion sites between $(\Delta_3', \Delta_2')$ pairs have the form $(u, s_c) (s_c^*, s_b)$, so an inserted monomer must have the form $(\underline{~~}, u^*, s_b^*, \underline{~~})^-$ and thus is in $\Delta_1'$.
An insertion site $(u, s_c) (s_c^*, s_b)$ implies a rule of the form $(c, d) \rightarrow (c, b) (e, d)$ in $\Gamma$, so there exists a monomer $(s_b, u^*, s_b^*, x)^- \in \Delta_1'$ that can be inserted.

Insertion sites between $(\Delta_2', \Delta_2')$ pairs can only occur once a monomer $m_2 \in \Delta_2'$ has been inserted between a pair of adjacent monomers $m_1 m_3$ with either $m_1 \in \Delta_2'$ or $m_3 \in \Delta_2'$, but not both.
But we just proved that all such such possible insertions only permit $m_2 \in \Delta_3' \cup \Delta_1'$. 
Moreover, the initial insertion site between $Q'$ and $R'$ has the form $(u, s_a) (s_b, u^*)$ of an insertion site with $m_1 \in \Delta_3'$ and $m_3 \in \Delta_1'$.
So no pair of adjacent monomers $m_1 m_3$ are ever both from $\Delta_2'$ and no insertion site between $(\Delta_2', \Delta_2')$ pairs can ever exist.

Insertion sites between $(\Delta_3', \Delta_1')$ pairs have the form $(u, s_c) (s_b, u^*)$, so an inserted monomer must have the form $(s_c^*, \underline{~~}, \underline{~~}, s_b^*)^+$ and is in $\Delta_2'$ or $\Delta_4'$.
We prove by induction that for each such insertion site $(u, s_c) (s_b, u^*)$ that $(c, b) \in \Gamma$.
First, observe that this is true for the insertion site $(u, s_a) (s_b, u^*)$ between $Q'$ and $R'$, since $(a, b) = S \in \Gamma$. 
Next, suppose this is true for all insertion sites of some polymer and a monomer $m_2 \in \Delta_2' \cup \Delta_4'$ is about to be inserted into the polymer between monomers from $\Delta_3'$ and $\Delta_1'$.
Inserting a monomer $m_2 \in \Delta_4'$ only reduces the set of insertion sites between monomers in $\Delta_3'$ and $\Delta_1'$, and the inductive hypothesis holds.
Inserting a monomer $m_2 \in \Delta_2'$ induces new $(\Delta_3', \Delta_2')$ and $(\Delta_2', \Delta_1')$ insertion site pairs between $m_1 m_2$ and $m_2 m_3$.
These pairs must accept two monomers $m_4 \in \Delta_1$ and $m_5 \in \Delta_3$, inducing a sequence of monomers $m_1 m_4 m_2 m_5 m_3$ with adjacent pairs $(\Delta_3', \Delta_1')$, $(\Delta_1', \Delta_2')$, $(\Delta_2', \Delta_3')$, $(\Delta_3', \Delta_1')$.
Only the first and last pairs permit insertion and both are $(\Delta_3', \Delta_1')$ pairs.

Now consider the details of the three insertions yielding $m_1 m_4 m_2 m_5 m_3$, starting with $m_1 m_3$.
The initial insertion site $m_1 m_3$ must have the form $(u, s_a) (s_d, u^*)$.
So the sequence of insertions has the following form, with the last two insertions interchangeable:

\begin{center}
$\begin{array}{c}
(u, s_a) \diamond (s_d, u^*)\\
(u, s_a^*) \diamond \bm{(s_a^*, s_b, s_c^*, s_d^*)} \diamond (s_d, u^*)\\
(u, s_a) \diamond \bm{(s_b, u^*, s_b^*, x)} (s_a^*, s_b, s_c^*, s_d^*) \diamond (s_d, u^*)\\
(u, s_a) \diamond (s_b, u^*, s_b^*, x) (s_a^*, s_b, s_c^*, s_d^*) \bm{(x, s_c, u, s_c)} \diamond (s_d, u^*)\\
\end{array}$
\end{center}

Notice the two resulting $(\Delta_3', \Delta_1')$ pair insertion sites $(u, s_a) (s_b, u^*)$ and $(u, s_c) (s_d, u^*)$.
Assume, by induction, that the monomer $m_2$ must exist.
So there is a rule $(a, d) \rightarrow (a, b) (c, d) \in \Delta$ and $(a, b), (c, d) \in \Gamma$, fulfilling the inductive hypothesis.
So for every insertion site $(u, s_c) (s_b, u^*)$ between a $(\Delta_3', \Delta_1')$ pair there exists a non-terminal $(c, b) \in \Gamma$.
So for every adjacent monomer pair $m_1 m_3$ with $m_1 \in \Delta_3'$ and $m_3 \in \Delta_1'$, there exists a monomer $m_2 \in \Delta_2' \cup \Delta_4'$ that can be inserted between $m_1$ and $m_2$. 

\textbf{Partial derivations and terminal polymers.} Next, consider the sequence of insertion sites between $(\Delta_3', \Delta_1')$ pairs in a polymer constructed by a modified version of $\mathcal{S}$ lacking the monomers of $\Delta_4'$.
We claim that a polymer with a sequence $(u, s_{a_1}) (s_{b_1}, u^*), (u, s_{a_2}) (s_{b_2}, u^*), \dots, (u, s_{a_i}) (s_{b_i}, u^*)$ of $(\Delta_3', \Delta_1')$ insertion sites is constructed if and only if there is a partial derivation $(a_1, b_1) (a_2, b_2) \dots (a_i, b_i)$ of a string in $L(\mathcal{G})$.
This follows directly from the previous proof by observing that two new adjacent $(\Delta_3', \Delta_1')$ pair insertion sites $(u, s_a) (s_b, u^*)$ and $(u, s_c) (s_d, u^*)$ can replace a $(\Delta_3', \Delta_1')$ pair insertion site if and only if there exists a rule $(a, d) \rightarrow (a, b) (c, d) \in \Delta$.

Observe that any string in $L(\mathcal{G})$ can be derived by first deriving a partial derivation containing only non-terminals, then applying only rules of the form $(a, d) \rightarrow t$.
Similarly, since the monomers of $\Delta_4'$ never form half of a valid insertion site, any terminal polymer of $\mathcal{S}$ can be constructed by first generating a polymer containing only monomers in $\Delta_1' \cup \Delta_2' \cup \Delta_3'$, then only inserting monomers from $\Delta_4'$.
Also note that the types of insertions possible in $\mathcal{S}$ imply that in any terminal polymer, any triple of adjacent monomers $m_1 m_2 m_3$ with $m_1 \in \Delta_i'$, $m_2 \in \Delta_j'$, and $m_3 \in \Delta_k'$, that $(i, j, k) \in \{(4, 1, 2), (1, 2, 3), (2, 3, 4), (3, 4, 1)\}$, with the first and last monomers of the polymer in $\Delta_4'$.

\textbf{Expression.} Define the following piecewise function $g : \Sigma' \cup \{ s^* : s \in \Sigma' \} \rightarrow \Sigma \cup \{ \varepsilon \}$ that maps to $\varepsilon$ except for second symbols of monomers in $\Delta_4'$.

\begin{displaymath}
   g(s) = \left\{
     \begin{array}{ll}
       t, & \text{if } t \in \Sigma \\
       \varepsilon, & \text{otherwise}
     \end{array}
   \right.
\end{displaymath}

Observe that every string in $L(\mathcal{S})$ has length $2 + 4 \cdot (4n - 3) + 2 = 16n-8$ for some $n \geq 0$.
Also, for each string $s_1' s_2' \dots s_{16n-8}' \in L(\mathcal{S})$, $g(s_1') g(s_2') \dots g(s_{16n-8}') =  \varepsilon^3 t_1 \varepsilon^{16} t_2 \varepsilon^{16} \dots t_n \varepsilon^5$. 
There is a terminal polymer with string representation in $L(\mathcal{S})$ yielding the sequence $s_1 s_2 \dots s_n$ if and only if the polymer can be constructed by first generating a terminal polymer excluding $\Delta_4'$ monomers with a sequence of $(\Delta_3', \Delta_1')$ insertion pairs $(a_1, b_1) (a_2, b_2) \dots (a_n, b_n)$ followed by a sequence of insertions of monomers from $\Delta_4'$ with second symbols $t_1 t_2 \dots t_n$.
Such a generation is possible if and only if $(a_1, b_1) (a_2, b_2) \dots (a_n, b_n)$ is a partial derivation of a string in $L(\mathcal{G})$ and $(a_1, b_1) \rightarrow t_1, (a_2, b_2) \rightarrow t_2, \dots, (a_n, b_n) \rightarrow t_n \in \Delta$. 
So applying the function $g$ to the string representations of the terminal polymers of $\mathcal{S}$ gives $L(\mathcal{G})$, i.e. $L(\mathcal{S}) = L(\mathcal{G})$.
Moreover, the second symbol in every fourth monomer in a terminal polymer of $\mathcal{S}$ maps to a symbol of $\Sigma$ using $g$. 
So $\mathcal{S}$ expresses $\mathcal{G}$ with the function $g$ and $\kappa = 16$.
\end{proof}

\section{Positive Results for Polymer Growth}
\label{sec:positive-results}

Dabby and Chen also consider the size and speed of constructing finite polymers.
They give a construction achieving the following result:
 
\begin{theorem}[\cite{Dabby-2013a}] 
\label{thm:dabby-chen-fast}
For any positive integer $r$, there exists an insertion system with $O(r^2)$ monomer types that deterministically constructs a polymer of length $n = 2^{\Theta(r)}$ in $O(\log^3{n})$ expected time.
Moreover, the expected time has an exponentially decaying tail probability.
\end{theorem}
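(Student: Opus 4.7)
The plan is to realize this exponential-length construction as a balanced binary tree of insertions, where the initiator provides the root site and every insertion ``splits'' a single site into two fresh child sites at a lower level. With $r$ such levels of splitting, the resulting polymer has length $\Theta(2^r) = n$, giving the stated relation between $n$ and $r$. The heart of the design is choosing the monomer symbols so that each newly created site carries an encoding of its level, which accomplishes two things simultaneously: (i)~at every insertion site exactly one monomer type is insertable, guaranteeing determinism, and (ii)~a level-$i$ insertion always produces sites at level $i-1$ rather than at some other level, so that no cross-level reactions occur.

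I would build the monomer set level by level. A level-$i$ gadget must record its level on both flanks of the two newly created sites and distinguish the ``left'' and ``right'' children of its insertion; this can be done with $O(r)$ monomer types per level, totaling $O(r^2)$. Level-$0$ monomers act as ``caps'': inserting one leaves a dead insertion site of the form $(a,b)(\overline{b},\overline{a})$ on both sides, ending growth on that branch. Determinism and the absence of cross-level reactions would then be verified by induction on the level, in the same style as the insertion-type case analysis in the proof of Lemma~\ref{lem:IS-express-PG}, by examining the finite list of possible adjacent monomer pairs and showing that each admits a unique insertable monomer (or none, at level~$0$).

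For the timing bound, I would give every monomer type the uniform concentration $\Theta(1/r^2)$, which sums to at most~$1$. Any fixed root-to-leaf chain of the insertion tree consists of $r$ successive insertions, each completing after an exponential waiting time with rate $\Theta(1/r^2)$, so the chain finishes in expected time $\Theta(r^3) = \Theta(\log^3 n)$. To upgrade this to a bound on the maximum over all $2^{\Theta(r)}$ chains, I would apply a standard Chernoff / sub-gamma bound for sums of $r$ i.i.d.\ exponentials: a single chain exceeds $K r^3$ with probability at most $\exp(-r \cdot f(K))$ where $f(K) \to \infty$ as $K$ grows. Choosing $K$ large enough that $f(K) > \ln 2$ and union-bounding over the $2^{\Theta(r)}$ chains gives an exponentially decaying tail in $r = \Theta(\log n)$, and integrating the tail yields $\mathbb{E}[T_{\max}] = O(r^3) = O(\log^3 n)$.

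The step I expect to be the main obstacle is not the timing analysis, which is essentially a Cauchy--Schwarz argument showing that equal concentrations are optimal together with a standard tail bound, but rather the explicit construction and verification of the monomer set. The insertion primitive only requires complementarity on one of the two boundary symbols per Figure~\ref{fig:figure} and distinguishes positive from negative monomers by the sign of the insertion site, so it is delicate to guarantee that a level-$i$ gadget never matches spuriously against a site from a different level or on the wrong side, and that every child site emerges with the intended orientation. Essentially all of the work lies in bookkeeping enough state into the $O(r)$ symbols of each level to make this inductive argument go through cleanly.
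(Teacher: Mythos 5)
The paper does not actually prove this statement: it is quoted verbatim from Dabby and Chen~\cite{Dabby-2013a}, and the paper's own contribution in this direction is the stronger Theorem~\ref{thm:types-extreme-ub} (length $2^{\Theta(r^3)}$ in $O(\log^{5/3} n)$ time). So there is no ``paper's own proof'' to match against, but your sketch can be judged against the machinery the paper builds for Theorem~\ref{thm:types-extreme-ub}, which is the natural reference. Your global plan is the right shape: a depth-$r$ ``doubling'' process giving $n = 2^{\Theta(r)}$, uniform concentrations $\Theta(1/r^2)$, and a Chernoff bound plus union bound over the $2^{\Theta(r)}$ root-to-leaf chains. The timing computation is sound and is essentially the same tail argument the paper carries out at the end of the proof of Theorem~\ref{thm:types-extreme-ub}.

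The gap is in the sentence ``every insertion splits a single site into two fresh child sites at a lower level,'' and it is larger than the bookkeeping concern you flag. A single insertion cannot assign an arbitrary new level to both children. Inserting into a site $(a,b)(c,d)$ yields sites $(a,b)(e,f)$ and $(g,h)(c,d)$: each child inherits one full symbol pair from the parent, and (per the insertion rules of Section~\ref{sec:is-defns}) even the inner symbol that abuts it is forced to be $\overline{b}$ or $\overline{c}$. So only one symbol per child is genuinely chosen by the designer, and the inherited outer symbols --- which are precisely what determine the \emph{next} insertable monomer on that side --- drift apart as you descend the tree. Without additional machinery the set of distinct site states visible at depth $i$ is not $O(1)$ or even obviously $O(r)$; it depends on pairs of levels, which is likely why Dabby and Chen land at $O(r^2)$ rather than $O(r)$ types, and your ``$O(r)$ per level'' count is asserted rather than derived. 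The fix is the one the paper uses for Theorem~\ref{thm:types-extreme-ub}: replace each tree edge by a short \emph{replacement} gadget (Table~\ref{tab:replacements}) that performs several insertions, rewrites one symbol of the site at a time, and uses a reserved symbol $x$ (whose complement appears on no monomer) to render every intermediate site dead so that exactly the two intended child sites survive. Once each edge of your tree is implemented this way, the site states can be normalized to a controlled $O(r^2)$ family, determinism can be checked by the kind of case analysis you cite from Lemma~\ref{lem:IS-express-PG}, and your timing analysis goes through unchanged up to the constant hidden in the $O(1)$ insertions per edge.
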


Here we improve on this construction significantly in both polymer length and expected running time.
In Section~\ref{sec:negative-results}, we prove that this construction is the best possible with respect to both the polymer length and construction time.

\begin{theorem}
\label{thm:types-extreme-ub}
For any positive integer $r$, there exists an insertion system with $O(r^2)$ monomer types that deterministically constructs a polymer of length~$n = 2^{\Theta(r^3)}$ in $O(\log^{5/3}(n))$ expected time.
Moreover, the expected time has an exponentially decaying tail probability.
\end{theorem}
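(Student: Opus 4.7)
The plan is to generalize the Dabby--Chen construction of Theorem~\ref{thm:dabby-chen-fast} by layering a second tier of doubling on top of theirs. Their system uses $O(r^2)$ monomer types to implement $\Theta(r)$ levels of doubling, reaching length $2^{\Theta(r)}$. I would instead split the same $O(r^2)$ type budget between two nested counters: an \emph{outer} counter with $\Theta(r)$ super-levels using $O(r)$ types for boundary propagation, and an \emph{inner} counter of depth $\Theta(r^2)$ using $O(r^2)$ types, with one invocation of the inner counter per outer super-level. Because the current outer super-level index is carried on the boundary monomers bracketing each inner expansion, the inner monomers can be \emph{reused} across all outer levels, keeping the total type budget at $O(r^2)$. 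After $\Theta(r)$ nested outer expansions the polymer has length $2^{\Theta(r) \cdot \Theta(r^2)} = 2^{\Theta(r^3)}$, matching the abstract's bound of $2^{\Theta(k^{3/2})}$ with $k = r^2$.

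\textbf{Main steps.} First, I would describe the two families of monomers explicitly, with sign and symbol choices analogous to the level gadgets of Dabby--Chen; the outer monomers would carry the super-level index $i \in \{0,\dots,r-1\}$, while the inner monomers would carry only the inner-level index $j \in \{0,\dots,r^2-1\}$. Determinism is then proved by induction on the insertion sequence, using the same casework on sites and signs as in Lemma~\ref{lem:IS-express-PG}: at every reachable insertion site, exactly one monomer type matches on both signs and on the specific symbols present. Second, the length bound $n = 2^{\Theta(r^3)}$ follows immediately by unrolling the two-tier recursion. Third, for the running time, I would use two facts: each monomer concentration can be set to $\Omega(1/r^2)$, so the expected waiting time at any specific site is $O(r^2)$; and the insertion tree is balanced, of depth $O(\log n) = O(r^3)$. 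Summing waiting times along any root-to-leaf path yields $O(r^3 \cdot r^2) = O(r^5) = O(\log^{5/3} n)$ expected time, and the exponentially decaying tail follows from a Chernoff bound on sums of independent exponentials, just as in Dabby--Chen.

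\textbf{Main obstacle.} The real difficulty is designing the inner gadget so that its insertion rules are \emph{insensitive} to which outer super-level is currently being expanded; without such reuse the construction would need $\Theta(r^3)$ types and reduce to the Dabby--Chen bound. I would encode the outer index only on the two monomers bracketing each inner region and design the inner rules to match on the ``free'' sign-carrying symbols while leaving the outer-index symbols untouched, in the same spirit as the $(a,d) \rightarrow (a,b)(c,d)$ bookkeeping used in Lemma~\ref{lem:PG-express-CFG}. A secondary subtlety is that the completion time is the maximum over $\mathrm{poly}(n)$ root-to-leaf completion times, and a naive union bound loses a $\log n$ factor; I expect this to be controllable via the subgaussian concentration of sums of $\Theta(r^3)$ independent exponentials of rate $\Omega(1/r^2)$, so that the maximum is dominated by its expectation up to lower-order terms and the advertised $O(\log^{5/3} n)$ bound is preserved.
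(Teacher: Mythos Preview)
Your high-level strategy---encode a counter with $\Theta(r^3)$ states using $O(r^2)$ monomer types by having each monomer reference only partial state, then double at every counter step---is exactly right, and your running-time analysis (depth $O(r^3)$, rate $\Omega(1/r^2)$ per site, Chernoff plus union bound over $2^{O(r^3)}$ root-to-leaf paths) matches the paper's. But the specific two-tier decomposition you propose has a gap that the paper's three-tier decomposition is designed to avoid.

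The gap is your claim that inner monomers can be made insensitive to the outer index $i$. In a positive site $(a,b)(c,\overline{a})$, a positive monomer $(\overline{b},e,f,\overline{c})^+$ indeed ignores $a$; but the two sites it creates, $(a,b)(\overline{b},e)$ and $(f,\overline{c})(c,\overline{a})$, are \emph{negative}, and the negative monomers insertable there have the forms $(\,\cdot\,,\overline{a},\overline{e},\,\cdot\,)^-$ and $(\,\cdot\,,\overline{f},a,\,\cdot\,)^-$, both of which must match $a$. Positive and negative insertions alternate, so every second monomer in your inner expansion must reference the outer index. If your inner index is a single variable $j\in\{0,\dots,r^2-1\}$ and these negative monomers also carry information determining $j$ (as they must, to advance the inner counter deterministically), you need $\Theta(r\cdot r^2)=\Theta(r^3)$ negative monomer types, not $O(r^2)$.

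The paper fixes this not by making the inner gadget blind to the outer variable, but by splitting the inner index into \emph{two} variables $b,c\in\{0,\dots,r\}$ and running a triple for-loop on $(a,b,c)$. The site $(s_a,s_b)(s_c,s_a^*)$ carries all three; every positive monomer in the construction references only $(b,c)$, and every negative monomer references $(a,b)$ or $(a,c)$. Each of these pairs ranges over $O(r^2)$ values, so the total type count stays at $O(r^2)$ while the counter runs for $\Theta(r^3)$ steps with a duplication at each inner step. Your two-tier plan becomes this three-tier plan once you encode $j$ as a pair in $\{0,\dots,r\}^2$; that is the missing idea.
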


\begin{proof}
The approach is to implement a three variable counter where each variable ranges over the values $0$ to $r$, effectively carrying out the execution of a triple for-loop. 
Insertion sites of the form $(s_a, s_b) (s_c, s_a^*)$ are used to encode the state of the counter, where $a$, $b$, and $c$ are the variables of the outer, inner, and middle loops, respectively. 
Three types of variable increments are carried out by the counter:
\begin{enumerate}[leftmargin=2cm]
\item[Inner:] If $b < r$, then $(s_a, s_b) (s_c, s_a^*) \leadsto (s_a, s_{b+1}) (s_c, s_a^*)$.
\item[Middle:] If $b = r$ and $c < r$, then $(s_a, s_b) (s_c, s_a^*) \leadsto (s_a, s_0) (s_{c+1}, s_a^*)$.
\item[Outer:] If $b = c = r$ and $a < r$, then $(s_a, s_b) (s_c, s_a^*) \leadsto (s_{a+1}, s_0) (s_0, s_{a+1}^*)$.
\end{enumerate} 

For $r = 2$, these increment types give an insertion sequence of the following form from left to right:
\begin{center}
\begin{tabular}{r@{\hskip 2pt}c@{\hskip 2pt}l@{\hskip 40pt}r@{\hskip 2pt}c@{\hskip 2pt}l@{\hskip 40pt}r@{\hskip 2pt}c@{\hskip 2pt}l}
$(s_0, s_0)$&&$(s_0, s_0^*)$&$(s_1, s_0)$&&$(s_0, s_1^*)$&$(s_2, s_0)$&&$(s_0, s_2^*)$\\[-1 pt]
&\rotatebox[origin=c]{270}{$\leadsto$}&{\tiny inner$\times 2$}&
&\rotatebox[origin=c]{270}{$\leadsto$}&{\tiny inner$\times 2$}&
&\rotatebox[origin=c]{270}{$\leadsto$}&{\tiny inner$\times 2$} \\[-1 pt]
$(s_0, s_2)$&&$(s_0, s_0^*)$&$(s_1, s_2)$&&$(s_0, s_1^*)$&$(s_2, s_2)$&&$(s_0, s_2^*)$\\[-1 pt]
&\rotatebox[origin=c]{270}{$\leadsto$}&{\tiny middle}&
&\rotatebox[origin=c]{270}{$\leadsto$}&{\tiny middle}&
&\rotatebox[origin=c]{270}{$\leadsto$}&{\tiny middle}\\[-1 pt]
$(s_0, s_0)$&&$(s_1, s_0^*)$&$(s_1, s_0)$&&$(s_1, s_1^*)$&$(s_2, s_0)$&&$(s_1, s_2^*)$\\[-1 pt]
&\rotatebox[origin=c]{270}{$\leadsto$}&{\tiny inner$\times 2$}&
&\rotatebox[origin=c]{270}{$\leadsto$}&{\tiny inner$\times 2$}&
&\rotatebox[origin=c]{270}{$\leadsto$}&{\tiny inner$\times 2$} \\[-1 pt]
$(s_0, s_2)$&&$(s_1, s_0^*)$&$(s_1, s_2)$&&$(s_1, s_1^*)$&$(s_2, s_2)$&&$(s_1, s_2^*)$\\[-1 pt]
&\rotatebox[origin=c]{270}{$\leadsto$}&{\tiny middle}&
&\rotatebox[origin=c]{270}{$\leadsto$}&{\tiny middle}&
&\rotatebox[origin=c]{270}{$\leadsto$}&{\tiny middle}\\[-1 pt]
$(s_0, s_0)$&&$(s_2, s_0^*)$&$(s_1, s_0)$&&$(s_2, s_1^*)$&$(s_2, s_0)$&&$(s_2, s_2^*)$\\[-1 pt]
&\rotatebox[origin=c]{270}{$\leadsto$}&{\tiny inner$\times 2$}&
&\rotatebox[origin=c]{270}{$\leadsto$}&{\tiny inner$\times 2$}&
&\rotatebox[origin=c]{270}{$\leadsto$}&{\tiny inner$\times 2$} \\[-1 pt]
$(s_0, s_2)$&&$(s_2, s_0^*)$&$(s_1, s_2)$&&$(s_2, s_1^*)$&$(s_2, s_2)$&&$(s_2, s_2^*)$\\[-1 pt]
&\rotatebox[origin=c]{270}{$\leadsto$}&{\tiny outer}&
&\rotatebox[origin=c]{270}{$\leadsto$}&{\tiny outer}&
&&\\[-1 pt]
$(s_1, s_0)$&&$(s_0, s_1^*)$&$(s_2, s_0)$&&$(s_0, s_2^*)$&\\[-1 pt]
\end{tabular}
\end{center}

A site is \emph{modified} by an insertion sequence that yields a new usable site where all other sites created by the insertion sequence are unusable.
For instance, we modify a site $(s_a, \bm{s_b}) (s_c, s_a^*)$ to become $(s_a, \bm{s_d}) (s_c, s_a^*)$, written $(s_a, s_b) (s_c, s_a^*) \leadsto (s_a, s_d) (s_c, s_a^*)$, by adding the monomer types $(s_b^*, x, u, s_c^*)^+$ and $(x, u^*, s_a, s_d)^-$ to the system, where $x$ is a special symbol whose complement is not found on any monomer.
These two monomer types cause the following insertion sequence, using $\diamond$ to indicate the site being modified and the inserted monomer shown in bold:
\begin{center}
$\begin{array}{c}
(s_a, s_b) \diamond (s_c, s_a^*)\\
(s_a, s_b)\bm{(s_b^*, x, u, s_c^*)} \diamond (s_c, s_a^*)\\
(s_a, s_b) (s_b^*, x, u, s_c^*) \bm{(x, u^*, s_a, s_d)} \diamond (s_c, s_a^*)
\end{array}$
\end{center}

We call this simple modification, where a single symbol in the insertion site is replaced with another symbol, a \emph{replacement}. 
There are four types of replacements, seen in Table~\ref{tab:replacements}, that can each be implemented by a pair of corresponding monomers.
\renewcommand{\arraystretch}{1.25}
\begin{table}[ht!]
\begin{center}
\begin{tabular}{| c | c |}
\hline
Replacement & Monomers \\
\hline
$(s_a, \bm{s_b}) (s_c, s_a^*) \leadsto (s_a, \bm{s_d}) (s_c, s_a^*)$ & $(s_b^*, x, u, s_c^*)^+$, $(x, u^*, s_a, s_d)^-$   \\
$(s_a, s_b) (\bm{s_c}, s_a^*) \leadsto (s_a, s_b) (\bm{s_d}, s_a^*)$ & $(s_b^*, u, x, s_c^*)^+$, $(s_d, s_a^*, u^*, x)^-$ \\
$(\bm{s_b}, s_a) (s_a^*, s_c) \leadsto (\bm{s_d}, s_a) (s_a^*, s_c)$ & $(x, s_b^*, s_c^*, u)^-$, $(u^*, x, s_d, s_a)^+$   \\
$(s_b, s_a) (s_a^*, \bm{s_c}) \leadsto (s_b, s_a) (s_a^*, \bm{s_d})$ & $(u, s_b^*, s_c^*, x)^-$, $(s_a^*, s_d, x, u^*)^+$ \\ 
\hline
\end{tabular}
\end{center}
\caption{The four types of replacement steps and monomer pairs that implement them.
The symbol $u$ can be any symbol, and $x$ is a special symbol whose complement does not appear on any monomer.}
\label{tab:replacements}
\end{table}

Each of the three increment types are implemented using a sequence of site modifications.
The resulting triple for-loop carries out a sequence of $\Theta(r^3)$ insertions to construct a $\Theta(r^3)$-length polymer.
A $2^{\Theta(r^3)}$-length polymer is achieved by simultaneously duplicating each site during each inner increment.
In the remainder of the proof, we detail the implementation of each increment type, starting with the simplest: middle increments.

\textbf{Middle increment.}
A middle increment of a site $(s_a, s_b) (s_c, s_a^*)$ occurs when the site has the form $(s_a, s_r) (s_c, s_a^*)$ with $0 \leq c < r$, performing the modification $(s_a, s_r) (s_c, s_a^*) \leadsto (s_a, s_0) (s_{c+1}, s_a^*)$.
We implement middle increments using a sequence of three replacements:
$$ (s_a, s_r) (s_c, s_a^*) \overset{1}{\leadsto} (s_a, s_r) (s_{f_1(c)}, s_a^*) \overset{2}{\leadsto} (s_a, s_0) (s_{f_1(c)}, s_a^*) \overset{3}{\leadsto} (s_a, s_0) (s_{c+1}, s_a^*) $$

where $f_i(n) = n + 2ir^2$.
Use of the function $f$ avoids unintended interactions between monomers, since for any $n_1, n_2 \in \{0, 1, \dots, r\}$, $f_i(n_1) \neq f_j(n_2)$ for all $i \neq j$.
Compiling this sequence of replacements into monomer types gives the following monomers:

\begin{enumerate}[label=Step \arabic*:, leftmargin=2cm]
\item $(s_r^*, s_{f_2(c)}, x, s_c^*)^+$ and $(s_{f_1(c)}, s_a^*, s_{f_2(c)}^*, x)^-$.
\item $(s_r^*, x, s_{f_3(c)}, s_{f_1(c)}^*)^+$ and $(x, s_{f_3(c)}^*, s_a, s_0)^-$.
\item $(s_0^*, s_{f_4(c+1)}, x, s_{f_1(c)}^*)^+$ and $(s_{c+1}, s_a^*, s_{f_4(c+1)}^*, x)^-$. 
\end{enumerate}

This set of monomers results in the following sequence of insertions:
\begin{center}
$\begin{array}{c}
(s_a, s_r) \diamond (s_c, s_a^*)\\
(s_a, s_r) \diamond \bm{(s_r^*, s_{f_2(c)}, x, s_c^*)} (s_c, s_a^*)\\
(s_a, s_r) \diamond \bm{(s_{f_1(c)}, s_a^*, s_{f_2(c)}^*, x)} (s_r^*, s_{f_2(c)}, x, s_c^*) (s_c, s_a^*)\\
(s_a, s_r) \diamond (s_{f_1(c)}, s_a^*)\\
(s_a, s_r) \bm{(s_r^*, x, s_{f_3(c)}, s_{f_1(c)}^*)} \diamond (s_{f_1(c)}, s_a^*)\\
(s_a, s_r) (s_r^*, x, s_{f_3(c)}, s_{f_1(c)}^*) \bm{(x, s_{f_3(c)}^*, s_a, s_0)} \diamond (s_{f_1(c)}, s_a^*)\\
(s_a, s_0) \diamond (s_{f_1(c)}, s_a^*)\\
(s_a, s_0) \diamond \bm{(s_0^*, s_{f_4(c+1)}, x, s_{f_1(c)}^*)} (s_{f_1(c)}, s_a^*)\\
(s_a, s_0) \diamond \bm{(s_{c+1}, s_a^*, s_{f_4(c+1)}^*, x)} (s_0^*, s_{f_4(c+1)}, x, s_{f_1(c)}^*) (s_{f_1(c)}, s_a^*) \\
(s_a, s_0) \diamond (s_{c+1}, s_a^*)\\
\end{array}$
\end{center}

Since each inserted monomer has an instance of $x$, all other insertion sites created are unusable.
This is true of the insertions used for outer increments and duplications as well.

\textbf{Outer increment.}
An outer increment of the site $(s_a, s_b) (s_c, s_a^*)$ occurs when the site has the form $(s_a, s_r) (s_r, s_a^*)$ with $0 \leq a < r$.
We implement this step using a four-step sequence of three normal replacements and a special quadruple replacement (Step~2):
\begin{center}
$\begin{array}{c}
(s_a, s_r) (s_r, s_a^*) \overset{1}{\leadsto} (s_a, s_{f_6(a)}^*) (s_r, s_a^*) \overset{2}{\leadsto} (s_{a+1}, s_{f_7(r)}) (s_{f_6(a)}, s_{a+1}^*)\\
(s_{a+1}, s_{f_7(r)}) (s_{f_6(a)}, s_{a+1}^*) \overset{3}{\leadsto} (s_{a+1}, s_0) (s_{f_6(a)}, s_{a+1}^*) \overset{4}{\leadsto} (s_{a+1}, s_0) (s_0, s_{a+1}^*)
\end{array}$
\end{center}

As with middle increments, we compile replacement steps~1,~2, and~4 into monomers using Table~\ref{tab:replacements}:
\begin{enumerate}[label=Step \arabic*:, leftmargin=2cm]
\item $(s_r^*, x, s_{f_5(r)}, s_r^*)^+$ and $(x, s_{f_5(r)}^*, s_a, s_{f_6(a)}^*)^-$.
\item $(s_{f_6(a)}, s_{a+1}^*, x, s_r^*)^+$ and $(x, s_a^*, s_{a+1}, s_{f_7(r)})^-$.
\item $(s_{f_7(r)}^*, x, s_{f_8(r)}, s_{f_6(a)}^*)^+$ and $(x, s_{f_8(r)}^*, s_{a+1}, s_0)^-$.
\item $(s_0^*, s_{f_9(a)}, x, s_{f_6(a)}^*)^+$ and $(s_0, s_{a+1}^*, s_{f_9(a)}^*, x)^-$.
\end{enumerate} 

Here is the sequence of insertions, using $\diamond$ to indicate the site being modified and the inserted monomer shown in bold:
\begin{center}
$\begin{array}{c}
(s_a, s_r) \diamond (s_r, s_a^*) \\ 
(s_a, s_r) \bm{(s_r^*, x, s_{f_5(r)}, s_r^*)} \diamond (s_r, s_a^*) \\ 
(s_a, s_r) (s_r^*, x, s_{f_5(r)}, s_r^*) \bm{(x, s_{f_5(r)}^*, s_a, s_{f_6(a)}^*)} \diamond (s_r, s_a^*) \\ 
(s_a, s_{f_6(a)}^*) \diamond (s_r, s_a^*) \\
(s_a, s_{f_6(a)}^*) \diamond \bm{(s_{f_6(a)}, s_{a+1}^*, x, s_r^*)} (s_r, s_a^*) \\ 
(s_a, s_{f_6(a)}^*) \bm{(x, s_a^*, s_{a+1}, s_{f_7(r)})} \diamond (s_{f_6(a)}, s_{a+1}^*, x, s_r^*) (s_r, s_a^*) \\
(s_{a+1}, s_{f_7(r)}) \diamond (s_{f_6(a)}, s_{a+1}^*) \\
(s_{a+1}, s_{f_7(r)}) \bm{(s_{f_7(r)}^*, x, s_{f_8(r)}, s_{f_6(a)}^*)} \diamond (s_{f_6(a)}, s_{a+1}^*) \\
(s_{a+1}, s_{f_7(r)}) (s_{f_7(r)}^*, x, s_{f_8(r)}, s_{f_6(a)}^*) \bm{(x, s_{f_8(r)}^*, s_{a+1}, s_0)} \diamond (s_{f_6(a)}, s_{a+1}^*) \\
(s_{a+1}, s_0) \diamond (s_{f_6(a)}, s_{a+1}^*) \\
(s_{a+1}, s_0) \diamond \bm{(s_0^*, s_{f_9(a)}, x, s_{f_6(a)}^*)} (s_{f_6(a)}, s_{a+1}^*) \\
(s_{a+1}, s_0) \diamond \bm{(s_0, s_{a+1}^*, s_{f_9(a)}^*, x)} (s_0^*, s_{f_9(a)}, x, s_{f_6(a)}^*) (s_{f_6(a)}, s_{a+1}^*) \\
(s_{a+1}, s_0) \diamond (s_0, s_{a+1}^*) \\
\end{array}$
\end{center}

\textbf{Inner increment.}
The inner increment has two phases.
The first phase (Steps~1-2) performs duplication, modifying the initial site to a pair of sites: $(s_a, s_b) (s_c, s_a^*) \leadsto (s_a, s_b) (s_{f_{10}(c)}, s_a^*) \dots (s_a, s_{b+1}) (s_c, s_a^*)$, yielding an incremented version of the original site and one other site.
The second phase (Steps~3-5) is $(s_a, s_b) (s_{f_{10}(c)}, s_a^*) \leadsto (s_a, s_{b+1}) (s_c, a^*)$, transforming the second site into an incremented version of the original site.

For the first phase, we use the three monomers:

\begin{enumerate}[leftmargin=2cm]
\item[Step 1:] $(s_b^*, s_{f_{10}(c)}, s_{f_{10}(b+1)}, s_c^*)^+$.
\item[Step 2:] $(s_{f_{11}(c)}, s_a^*, s_{f_{10}(c)}^*, x)^-$ and $(x, s_{f_{10}(b+1)}^*, s_a, s_{b+1})^-$.
\end{enumerate}

The resulting sequence of insertions is
\begin{center}
$\begin{array}{c}
(s_a, s_b) \diamond (s_c, s_a^*) \\
(s_a, s_b) \diamond \bm{(s_b^*, s_{f_{10}(c)}, s_{f_{10}(b+1)}, s_c^*)} \diamond (s_c, s_a^*) \\
(s_a, s_b) \diamond \bm{(s_{f_{11}(c)}, s_a^*, s_{f_{10}(c)}^*, x)} (s_b^*, s_{f_{10}(c)}, s_{f_{10}(b+1)}, s_c^*) \diamond (s_c, s_a^*) \\
(s_a, s_b) \diamond (s_{f_{11}(c)}, s_a^*, s_{f_{10}(c)}^*, x) (s_b^*, s_{f_{10}(c)}, s_{f_{10}(b+1)}, s_c^*) \bm{(x, s_{f_{10}(b+1)}^*, s_a, s_{b+1})} \diamond (s_c, s_a^*) \\
(s_a, s_b) \diamond (s_{f_{11}(c)}, s_a^*) \dots (s_a, s_{b+1}) \diamond (s_c, s_a^*) \\
\end{array}$
\end{center}

The last two insertions occur independently and may happen in the opposite order of the sequence depicted here.
In the second phase, the site $(s_a, s_b) (s_{f_{11}(c)}, s_a^*)$ is transformed into $(s_a, s_{b+1}) (s_c, s_a^*)$ by a sequence of replacement steps:
\begin{center}
$\begin{array}{c}
(s_a, s_b) (s_{f_{11}(c)}, s_a^*) \overset{3}{\leadsto} (s_a, s_{f_{12}(b)}) (s_{f_{11}(c)}, s_a^*) \overset{4}{\leadsto} (s_a, s_{f_{12}(b)}) (s_c, s_a^*) \overset{5}{\leadsto} (s_a, s_{b+1}) (s_c, s_a^*)\\
\end{array}$
\end{center} 

As with previous sequences of replacement steps, we compile this sequence into a set of monomers:
\begin{enumerate}[leftmargin=2cm]
\item[Step 3:] $(s_b^*, x, s_{f_{13}(b)}, s_{f_{11}(c)}^*)^+$ and $(x, s_{f_{13}(b)}^*, s_a, s_{f_{12}(b)})^-$. 
\item[Step 4:] $(s_{f_{12}(b)}^*, s_{f_{14}(c)}, x, s_{f_{11}(c)}^*)^+$ and $(s_c, s_a^*, s_{f_{14}(c)}^*, x)^-$. 
\item[Step 5:] $(s_{f_{12}(b)}^*, x, s_{f_{15}(b+1)}, s_c^*)^+$ and $(x, s_{f_{15}(b+1)}^*, s_a, s_{b+1})^-$.
\end{enumerate}

The resulting sequence of insertions is
\begin{center}
$\begin{array}{c}
(s_a, s_b) \diamond (s_{f_{11}(c)}, s_a^*) \\
(s_a, s_b) \bm{(s_b^*, x, s_{f_{13}(b)}, s_{f_{11}(c)}^*)} \diamond (s_{f_{11}(c)}, s_a^*) \\
(s_a, s_b) (s_b^*, x, s_{f_{13}(b)}, s_{f_{11}(c)}^*) \bm{(x, s_{f_{13}(b)}^*, s_a, s_{f_{12}(b)})} \diamond (s_{f_{11}(c)}, s_a^*) \\
(s_a, s_{f_{12}(b)}) \diamond (s_{f_{11}(c)}, s_a^*) \\
(s_a, s_{f_{12}(b)}) \diamond \bm{(s_{f_{12}(b)}^*, s_{f_{14}(c)}, x, s_{f_{11}(c)}^*)} (s_{f_{11}(c)}, s_a^*) \\
(s_a, s_{f_{12}(b)}) \diamond \bm{(s_c, s_a^*, s_{f_{14}(c)}^*, x)} (s_{f_{12}(b)}^*, s_{f_{14}(c)}, x, s_{f_{11}(c)}^*) (s_{f_{11}(c)}, s_a^*) \\
(s_a, s_{f_{12}(b)}) \diamond (s_c, s_a^*) \\
(s_a, s_{f_{12}(b)}) \bm{(s_{f_{12}(b)}^*, x, s_{f_{15}(b+1)}, s_c^*)} \diamond (s_c, s_a^*) \\
(s_a, s_{f_{12}(b)}) (s_{f_{12}(b)}^*, x, s_{f_{15}(b+1)}, s_c^*) \bm{(x, s_{f_{15}(b+1)}^*, s_a, s_{b+1})} \diamond (s_c, s_a^*) \\
(s_a, s_{b+1}) \diamond (s_c, s_a^*) \\
\end{array}$
\end{center}

When combined, the two phases of duplication modify $(s_a, s_b) (s_c, s_a^*)$ to become $(s_a, s_{b+1}) (s_c, s_a^*) \dots (s_a, s_{b+1}) (s_c, s_a^*)$, where all sites between the duplicated sites are unusable.
Notice that although we need to duplicate $\Theta(r^3)$ distinct sites, only $\Theta(r^2)$ monomers are used in the implementation since each monomer either does not depend on $a$, e.g. $(s_b^*, x, s_{f_{13}(b)}, s_{f_{11}(c)}^*)^+$, or does not depend on $c$, e.g. $(x, s_{f_{13}(b)}^*, s_a, s_{f_{12}(b)})^-$.

\renewcommand{\arraystretch}{1.25}

\begin{table}
\begin{center}
\begin{tabular}{| c | l l |}
\hline
Step     & \multicolumn{2}{|c|}{Inner monomer types ($b < r$)}                        \\
\hline
1        & \multicolumn{2}{|c|}{$(s_b^*, s_{f_{10}(c)}, s_{f_{10}(b+1)}, s_c^*)^+$}                                \\ 
2        & $(s_{f_{11}(c)}, s_a^*, s_{f_{10}(c)}^*, x)^-$           & $(x, s_{f_{10}(b+1)}^*, s_a, s_{b+1})^-$     \\ 
3        & $(s_b^*, x, s_{f_{13}(b)}, s_{f_{11}(c)}^*)^+$           & $(x, s_{f_{13}(b)}^*, s_a, s_{f_{12}(b)})^-$ \\
4        & $(s_{f_{12}(b)}^*, s_{f_{14}(c)}, x, s_{f_{11}(c)}^*)^+$ & $(s_c, s_a^*, s_{f_{14}(c)}^*, x)^-$         \\
5        & $(s_{f_{12}(b)}^*, x, s_{f_{15}(b+1)}, s_c^*)^+$         & $(x, s_{f_{15}(b+1)}^*, s_a, s_{b+1})^-$     \\ [3pt]
\hline
Step     & \multicolumn{2}{|c|}{Middle monomer types ($c < r$)} \\
\hline
1        & $(s_r^*, s_{f_2(c)}, x, s_c^*)^+$          & $(s_{f_1(c)}, s_a^*, s_{f_2(c)}^*, x)^-$  \\
2        & $(s_r^*, x, s_{f_3(c)}, s_{f_1(c)}^*)^+$   & $(x, s_{f_3(c)}^*, s_a, s_0)^-$           \\
3        & $(s_0^*, s_{f_4(c+1)}, x, s_{f_1(c)}^*)^+$ & $(s_{c+1}, s_a^*, s_{f_4(c+1)}^*, x)^-$   \\ [3pt]
\hline
Step     & \multicolumn{2}{|c|}{Outer monomer types ($a < r$)} \\
\hline
1        & $(s_r^*, x, s_{f_5(r)}, s_r^*)^+$               & $(x, s_{f_5(r)}^*, s_a, s_{f_6(a)}^*)^-$   \\
2        & $(s_{f_6(a)}, s_{a+1}^*, x, s_r^*)^+$           & $(x, s_a^*, s_{a+1}, s_{f_7(r)})^-$        \\
3        & $(s_{f_7(r)}^*, x, s_{f_8(r)}, s_{f_6(a)}^*)^+$ & $(x, s_{f_8(r)}^*, s_{a+1}, s_0)^-$        \\
4        & $(s_0^*, s_{f_9(a)}, x, s_{f_6(a)}^*)^+$        & $(s_0, s_{a+1}^*, s_{f_9(a)}^*, x)^-$      \\ [3pt]
\hline
\end{tabular}
\end{center}
\caption{The set of all monomer types used to deterministically construct a monomer of size $2^{\Theta(r^3)}$ using $O(r^2)$ monomer types.}
\label{tab:all-monomers-types-extreme-ub}
\end{table}

\textbf{Putting it together.}
The system starts with the intiator $(s_0, s_0) (s_0, s_0^*)$.
Each increment of the counter occurs either through a middle increment, outer increment, or a duplication.
The total set of monomers is seen in Table~\ref{tab:all-monomers-types-extreme-ub}.
There are at most $(r+1)^2$ monomer types in each family (each row of Table~\ref{tab:all-monomers-types-extreme-ub}) and $O(r^2)$ monomer types total.

The system is deterministic if no pair of monomers can be inserted into any insertion site appearing during construction.
It can be verified by an inspection of Table~\ref{tab:all-monomers-types-extreme-ub} that any two positive monomers have distinct pairs of first and fourth symbols, and any pair of negative monomers have distinct pairs of second and third symbols.
So no two monomers can be inserted into the same site and thus the system is deterministic.

The size $P_i$ of a subpolymer with an initiator encoding some value $i$ between $0$ and $(r+1)^3-1$ can be bounded by $2P_{i+2} + 9 \leq P_i \leq 2P_{i+1} + 9$, since either $i+1$ or $i+2$ is an inner increment step and no step inserts more than 9 monomers. 
Moreover, $P_{(r+1)^3-2} \geq 1$.
So $P_0 + 2$, the size of the terminal polymer, is $2^{\Theta(r^3)}$. 
  
\textbf{Running time.}
Define the concentration of each monomer type to be equal.
There are $12r^2 + 24r + 3 \leq 39r^2$ monomer types, so each monomer type has concentration at least $1/(39r^2)$.
The polymer is complete as soon as every counter's variables have reached the value $a = b = c = r$, i.e. every site encoding a counter has been modified to become $(s_r, s_r) (s_r, s_r^*)$ and the monomer $(s_r^*, x, s_{f_5(r)}, s_r^*)^+$ has been inserted.

There are fewer than $2^{r^3}$ such insertions, and each insertion requires at most $9 \cdot (r+1)^3 \leq 72r^3$ previous insertions to occur.
So an upper bound on the expected time $T_r$ for each such insertion is described as a sum of $72r^3$ random variables, each with expected time $39r^2$.
The Chernoff bound for independent exponential random variables~\cite{Chernoff-1952} implies the following upper bound on $T_r$:

\begin{align*}
{\rm Prob}[T_r > 39r^2 \cdot 72r^3(1 + \delta)] &\leq e^{-39 \cdot 72r^5 \delta^2 / (2 + \delta)} \\
&\leq e^{-r^5 \delta^2 / (2 + \delta)} \\
&\leq e^{-r^5 \delta^2 / (2\delta)} \rm{~for~all~} \delta \geq 2 \\
&\leq e^{-r^5 \delta / 2} \\
\end{align*}

Let $T_{\mathcal{S}_r}$ be the total running time of the system. 
Then we can bound $T_{\mathcal{S}_r}$ from above using the bound for $T_r$:

\begin{align*}
{\rm Prob}[T_{\mathcal{S}_r} > 39r^2 \cdot 72r^3(1 + \delta)] &\leq 2^{r^3} \cdot e^{-r^5 \delta / 2} \\
&\leq 2^{r^3} 2^{-r^5 \delta/2} \\ 
&\leq 2^{r^3 - r^5 \delta/2} \\
&\leq 2^{r^5 \delta/4 - r^5 \delta/2} \text{~for~all~} \delta \geq 4 \\
&\leq 2^{-r^5\delta/4}
\end{align*}

So ${\rm Prob}[T_{\mathcal{S}_r} > 39r^2 \cdot 72r^3(1 + \delta)] \leq 2^{-r^5\delta/4}$ for all $\delta \geq 4$.
So the expected value of $T_{\mathcal{S}_r}$, the construction time, is $O(r^5) = O(\log^{5/3}(n))$ with an exponentially decaying tail probability.
\end{proof}

\section{Negative Results for Polymer Growth}
\label{sec:negative-results}

Here we show that the construction in the previous section is the best possible.
We start by proving a helpful lemma on the number of insertion sites that accept at least one monomer type, which we call \emph{usable} insertion sites.

\begin{lemma}
\label{lem:usable-insertion-sites-ub}
Any insertion system with $k$ monomer types has at most $4k^{3/2}$ usable insertion sites.
\end{lemma}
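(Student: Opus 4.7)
The plan is to partition usable insertion sites into two classes according to the sign of the accepting monomer, bound each class by roughly $2k^{3/2}$ via a Cauchy-Schwarz argument, and sum. The two classes are disjoint: a site satisfying both $\overline{a} = d$ and $\overline{b} = c$ has the dead form $(a,b)(\overline{b},\overline{a})$ and accepts no monomer, so any usable site satisfies exactly one of the two validity conditions.

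The key structural observation is that every insertion site $(a,b)(c,d)$ arises as a gap between adjacent elements of some polymer, so $(a,b)$ is either the initiator pair $Q$ or the (third, fourth) symbol pair of some monomer. Letting $L$ denote this set of admissible left halves and $R'$ the analogous set of admissible right halves, $|L| \leq k+1$ and $|R'| \leq k+1$.

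For positive-usable sites I would parameterize by the (first, fourth) symbol pair $(p,q) = (\overline{b},\overline{c})$ of the accepting positive monomer; there are at most $k_+$ such distinct pairs. For each $(p,q)$, the symbols $b$ and $c$ are forced while $d = \overline{a}$ is determined by $a$, so the only free symbol is $a$, constrained by $(a,\overline{p}) \in L$ and $(\overline{q},\overline{a}) \in R'$. Writing $L_\beta = \{\alpha : (\alpha,\beta) \in L\}$ and $R'_\gamma = \{\delta : (\gamma,\delta) \in R'\}$, the number of valid $a$ is at most $\min(|L_{\overline{p}}|,|R'_{\overline{q}}|) \leq \sqrt{|L_{\overline{p}}| \cdot |R'_{\overline{q}}|}$. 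Summing over $(p,q)$ and applying Cauchy-Schwarz:
\begin{equation*}
\sum_{(p,q)} \sqrt{|L_{\overline{p}}| \cdot |R'_{\overline{q}}|} \leq \sqrt{k_+} \cdot \sqrt{\textstyle\sum_{p,q} |L_{\overline{p}}| \cdot |R'_{\overline{q}}|} = \sqrt{k_+} \cdot \sqrt{|L| \cdot |R'|} \leq \sqrt{k_+}\,(k+1) \leq 2k^{3/2},
\end{equation*}
using $k+1 \leq 2k$ for $k \geq 1$ (the $k=0$ case is trivial). The symmetric argument for negative-usable sites, parameterized instead by the (second, third) pair of the accepting negative monomer with $b$ playing the role of the free symbol, contributes another $2k^{3/2}$, for the claimed total of $4k^{3/2}$.

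The main obstacle is bookkeeping: tracking complements through the parameterization and verifying the factorization $\sum_{p,q} |L_{\overline{p}}| \cdot |R'_{\overline{q}}| = |L| \cdot |R'|$, which follows by reindexing $\beta = \overline{p}$, $\gamma = \overline{q}$ and noting $\sum_\beta |L_\beta| = |L|$. Conceptually, each usable site has three of its four symbols constrained by monomer incidences (two by the accepting monomer, one by the validity condition), leaving only a single degree of freedom; Cauchy-Schwarz then trades a naive $O(k \cdot k_+) = O(k^2)$ bound for the desired $O(k^{3/2})$.
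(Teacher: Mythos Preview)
Your proof is correct and reaches the same $4k^{3/2}$ bound, but by a genuinely different decomposition than the paper's. The paper indexes positive-usable sites $(s_i, s_b)(s_c, \overline{s_i})$ by the ``diagonal'' symbol $s_i$, bounds each slice $U_i$ both by $k$ (each accepting monomer hits at most one site in $U_i$) and by a product $|L_i|\cdot|R_i|$ of adjacent-monomer counts, and then maximizes $\sum_i \min(k,\,|L_i|\cdot|R_i|)$ under the budget $\sum_i|L_i|=\sum_i|R_i|=k$ via a direct optimization argument. You instead index by the outer pair $(p,q)$ of the accepting monomer, leaving the diagonal symbol $a$ as the sole free coordinate, and finish with Cauchy--Schwarz. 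The two parameterizations are dual---the paper fixes $a$ and varies $(b,c)$, you fix $(b,c)$ and vary $a$---and your Cauchy--Schwarz step is arguably cleaner than the paper's ad-hoc maximization. Your explicit $+1$ for the initiator halves $Q,R$ is also something the paper's proof glosses over.

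One small correction: the equality $\sum_{p,q} |L_{\overline{p}}|\cdot|R'_{\overline{q}}| = |L|\cdot|R'|$ in your display should be $\le$, since the sum runs only over the at most $k_+$ pairs $(p,q)$ realized by positive monomers, not over all symbol pairs. The full product is obtained only after extending the sum to all $(\beta,\gamma)$, which your reindexing remark tacitly does; this extension is an upper bound, so the argument is unaffected.
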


\begin{proof}
Let $\mathcal{S} = (\Sigma, \Delta, Q, R)$ be an insertion system that deterministically constructs a polymer of length $n$.
Let $k = |\Delta|$ (the number of monomer types in $\mathcal{S}$), and relabel the symbols in $\Sigma \cup \{s^* : s \in \Sigma\}$ as $s_1, s_2, \dots, s_{4k}$, with some of these symbols possibly unused.
Define the sets
$L_i = \{ (s_a, s_b, s_i, s_c)^{\pm} \in \Delta \}$ and
$R_i = \{ (s_a, s_i, s_b, s_c)^{\pm} \in \Delta \}$.
We will consider the number of usable insertion sites of $\mathcal{S}$, and define $U_i = \{ (s_i, s_b) (s_c, \overline{s_i}) \rm{~is~usable} \}$.

Since each monomer type can only be inserted into one site in each $U_i$, $|U_i| \leq k$, and since each usable site requires a distinct pair of right and left monomer pairs, $|U_i| \leq |L_i| \cdot |R_i|$.
So $|U_i| = \rm{min}(k, |L_i| \cdot |R_i|)$.
Since each monomer type appears in exactly one $L_i$ and $R_i$, $\sum_{i=1}^{4k}{|L_i|} = \sum_{i=1}^{4k}{|R_i|} = k$.

Consider maximizing $\sum_{i=1}^{4k}|U_i| = \sum_{i=1}^{4k}\rm{min}(k, |L_i| \cdot |R_i|)$ subject to $\sum_{i=1}^{4k}{|L_i|} = \sum_{i=1}^{4k}{|R_i|} = k$.
Clearly $|L_i| \cdot |R_i| \leq \rm{max}(|L_i|, |R_i|)^2$, and if we define $B_i = L_i \cup R_i$, $|L_i| \cdot |R_i| \leq |B_i|^2$.
Then $\sum_{i=1}^{4k}|U_i| \leq \sum_{i=1}^{4k}|B_i|^2$ with $\sum_{i=1}^{4k}|B_i| = 2k$ and $|B_i| \leq \sqrt{k}$.
So $\sum_{i=1}^{4k}|U_i| \leq (\sqrt{k})^2 \cdot 2\sqrt{k}$ and thus $\sum_{i=1}^{4k}|U_i| \leq 2k^{3/2}$.
So the set of all usable sites of the form $(s_i, s_b) (s_c, \overline{s_i})$ has size $2k^{3/2}$.

A similar argument using the monomer sets
$L_i' = \{ (s_a, s_b, s_c, s_i)^{\pm} \in \Delta \}$,
$R_i' = \{ (s_i, s_a, s_b, s_c)^{\pm} \in \Delta \}$, and insertion site set
$U_i' = \{ (s_b, s_i) (\overline{s_i}, s_c) \rm{~is~usable} \}$ suffices to prove that the set of all usable sites of the form $(s_b, s_i) (\overline{s_i}, s_c)$ also has size $2k^{3/2}$.
Since these describe all usable sites, $\mathcal{S}$ has at most $4k^{3/2}$ total usable sites.
\end{proof}

\begin{theorem}
\label{thm:monomer-types-lb}
Any polymer deterministically constructed by an insertion system with $k$ monomer types has length $2^{O(k^{3/2})}$.
\end{theorem}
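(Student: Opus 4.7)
The plan is to encode the deterministic construction of $P$ as a binary tree of insertion sites, show that no label repeats on any root-to-leaf path, and then apply Lemma~\ref{lem:usable-insertion-sites-ub} to bound the tree's depth.

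The first step is to define, for each usable site $s$, its \emph{expansion length} $L(s)$ as the number of monomers that end up inserted between the two flanking monomers of $s$ by the time the terminal polymer $P$ is reached. Since the insertion rules at $s$ depend only on $s$ itself (not on the rest of the polymer), and since the system is deterministic, $L(s)$ depends only on the label $s$ and not on where or when $s$ arises during construction.

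The second step is to fix any specific construction of $P$ and form a binary tree whose root is the initial insertion site between $Q$ and $R$, whose internal nodes are the usable sites that get filled (each with two children being the sites flanking the inserted monomer), and whose leaves are the dead insertion sites of $P$. By the first step, every node labelled $s$ roots a subtree with exactly $L(s) + 1$ leaves. The number of leaves of the whole tree is then the number of insertion sites in $P$, namely $n - 1$.

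If the same usable label $s$ appeared at a node $u'$ and a proper descendant $u$, then the subtree at $u'$ would contain the subtree at $u$ (which accounts for $L(s) + 1$ leaves) together with strictly more leaves coming from sibling subtrees along the path from $u$ to $u'$, contradicting the fact that the subtree at $u'$ also has exactly $L(s) + 1$ leaves. So distinct usable labels appear along each root-to-leaf path. By Lemma~\ref{lem:usable-insertion-sites-ub} there are at most $4k^{3/2}$ usable sites, hence the tree has depth at most $4k^{3/2}$ and therefore at most $2^{4k^{3/2}}$ leaves. Thus $n - 1 \leq 2^{4k^{3/2}}$, giving $n = 2^{O(k^{3/2})}$. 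The only delicate point will be justifying the context-independence of $L(s)$ (especially if multiple monomer types are insertable at $s$); once this is established, the tree-counting argument is routine.
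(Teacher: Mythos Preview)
Your proof is correct and follows essentially the same strategy as the paper: bound the number of usable insertion-site labels by Lemma~\ref{lem:usable-insertion-sites-ub}, then argue that no label can repeat along a root-to-leaf path of the binary insertion/derivation tree, forcing depth at most $4k^{3/2}$ and hence at most $2^{4k^{3/2}}$ leaves. The paper phrases this via the associated context-free grammar and a pumping argument (a repeated non-terminal on a long derivation path would make the language infinite, contradicting determinism), whereas you argue directly on the insertion tree using the well-definedness of $L(s)$; these are two presentations of the same idea, and the ``delicate point'' you flag about context-independence of $L(s)$ is exactly where the paper invokes determinism to rule out pumping.
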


\begin{proof}
Let $\mathcal{S}$ be a system with $k$ monomer types that deterministically constructs a polymer.
By Lemma~\ref{lem:usable-insertion-sites-ub}, $\mathcal{S}$ has $O(k^{3/2})$ usable sites.
As observed by Dabby and Chen, $\mathcal{S}$ can be expressed by a grammar $\mathcal{G}_{\mathcal{S}}$ with at most $4k^{3/2}$ non-terminal symbols, where each insertion site $(a, b) (c, d)$ corresponds to a non-terminal $A_{a,b,c,d}$, and each monomer type $(e, f, g, h)^{\pm}$ insertable into the site corresponds to a rule $A_{a, b, c, d} \rightarrow A_{a, b, e, f} A_{g, h, c, d}$.

Let $\sigma$ be a string in $L(\mathcal{G}_{\mathcal{S}})$ of length $n$.
So the (binary) derivation tree of any derivation of $\sigma$ contains a path of length at least $\log_2{n}$.
If $\log_2{n} > 4k^{3/2}$, then this path must contain at least two occurrances of the same non-terminal symbol.
The portion of the path between these two occurrances can be pumped to derive strings of arbitrary lengths, so $L(\mathcal{G}_{\mathcal{S}})$ is infinite.
So $L(\mathcal{S}) \neq L(\mathcal{G}_{\mathcal{S}})$ and $\mathcal{G}_{\mathcal{S}}$ does not express $\mathcal{S}$, a contradiction.
Thus $\log_2{n} \leq 4k^{3/2}$ for every string in $L(\mathcal{G}_{\mathcal{S}})$ and the length of the polymer deterministically constructed by $\mathcal{S}$ is $2^{O(k^{3/2})}$.
\end{proof}

\begin{theorem}
\label{thm:deterministic-growth-speed-lb}
Deterministically constructing a polymer of length $n$ takes $\Omega(\log^{5/3}(n))$ expected time.
\end{theorem}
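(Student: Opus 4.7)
My plan is to combine Theorem~\ref{thm:monomer-types-lb} with an expected-time analysis along a critical root-to-leaf path in the derivation tree of $\mathcal{S}$. By Theorem~\ref{thm:monomer-types-lb}, any deterministic insertion system $\mathcal{S}$ using $k$ monomer types to construct a polymer of length $n$ satisfies $k = \Omega((\log n)^{2/3})$, so the target bound $\Omega(\log^{5/3}(n))$ reduces to showing $\mathbb{E}[T] = \Omega(k \log n)$ for the expected construction time $T$.

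I would associate with $\mathcal{S}$ the Dabby--Chen grammar $\mathcal{G}_{\mathcal{S}}$ used in the proof of Theorem~\ref{thm:monomer-types-lb}, and consider the derivation tree of the unique terminal polymer: this tree is binary with $\Theta(n)$ leaves, so some root-to-leaf path has length $d \geq \log_2 n$, and the same pumping argument as in Theorem~\ref{thm:monomer-types-lb} forces all $d$ non-terminals (equivalently, all $d$ usable insertion sites) on such a path to be distinct. The construction time $T$ stochastically dominates the time to execute the $d$ insertions along this path in sequence, which is a sum of $d$ independent exponential waiting times with rates $\lambda_1, \ldots, \lambda_d$, where $\lambda_i = \sum_{m \in M_i} c_m$ is the total concentration of monomers insertable at the $i$th site. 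By linearity $\mathbb{E}[T] \geq \sum_{i=1}^{d} 1/\lambda_i \geq d^2 / \sum_{i=1}^{d} \lambda_i$ by the AM--HM inequality. Next, I would bound $\sum_i \lambda_i$ along the path by adapting the monomer-counting ideas in the proof of Lemma~\ref{lem:usable-insertion-sites-ub} together with the concentration constraint $\sum_m c_m \leq 1$, and combine with $d \geq \log_2 n$ and $k = \Omega((\log n)^{2/3})$. Finally, a Chernoff-type tail bound on sums of independent exponentials, analogous to the one used at the end of the proof of Theorem~\ref{thm:types-extreme-ub}, converts the lower bound on the expected time along this single path into a lower bound on $\mathbb{E}[T]$ itself.

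The main obstacle is tightening the bound on $\sum \lambda_i$ enough to reach exponent $5/3$ rather than $4/3$. The easy averaging---each monomer is insertable at $O(k)$ usable sites, so $\sum_{i \in \text{usable}} \lambda_i = O(k)$ distributed over $O(k^{3/2})$ sites from Lemma~\ref{lem:usable-insertion-sites-ub}---gives only a per-site rate of $O(k^{-1/2})$ on average, hence $\sum_{i \in \text{path}} \lambda_i = O(d k^{-1/2})$ and an AM--HM bound of $\Omega(d\sqrt{k}) = \Omega(\log^{4/3}(n))$. To obtain the sharp exponent, the argument must exploit structural properties of the unique deterministic derivation more carefully: either by selecting a critical path whose sites have aggregate rate only $O(\sqrt{k})$ (so that AM--HM delivers $\Omega(dk) = \Omega(k \log n)$), or by replacing the single-path AM--HM argument with a depth-rate tradeoff showing that whenever typical per-site rates on the path are large the derivation-tree depth $d$ must correspondingly exceed $\log_2 n$. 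Either route must leverage the interaction between the $O(k^{3/2})$ site budget of Lemma~\ref{lem:usable-insertion-sites-ub} and the unit concentration budget $\sum_m c_m \leq 1$ in a strictly finer way than uniform averaging, and this is where the bulk of the technical work lies.
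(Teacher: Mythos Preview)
Your proposal correctly identifies the insertion sequence of length $d \geq \log_2 n$ with pairwise distinct sites, and the inequality $\mathbb{E}[T] \geq \sum_i 1/\lambda_i$. But you miss the one observation that drives the paper's argument: in a \emph{deterministic} system, each site along this sequence accepts exactly one monomer type. (If a site accepted two types $m_1 \neq m_2$, inserting $m_1$ versus $m_2$ would lead to distinct terminal polymers, contradicting determinism.) Hence your $\lambda_i$ is not a sum $\sum_{m \in M_i} c_m$ but a single concentration $c_{m_i}$. Grouping by monomer type turns the expected path time into $\sum_{j} x_j / c_j$, where $x_j$ counts how often type $j$ is inserted along the path and $\sum_j c_j \leq 1$. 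The paper then applies Lemma~\ref{lem:usable-insertion-sites-ub} to the monomer types actually used along the path to get that their number $k$ satisfies $k \geq \log^{2/3}(n)/4$, and analyzes the resulting constrained minimization of $\sum_j x_j/c_j$ to obtain the bound $k \log n = \Omega(\log^{5/3} n)$.

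Without the unique-monomer-per-site observation you are genuinely stuck at the $4/3$ exponent you identify: the AM--HM bound $\sum 1/\lambda_i \geq d^2 / \sum \lambda_i$ cannot distinguish between concentration mass concentrated on one monomer per site versus spread over many, and your proposed refinements (selecting a low-rate path, a depth--rate tradeoff) are speculative substitutes for what is in fact a short consequence of determinism. Finally, your last step---using a Chernoff-type bound to pass from the path time to $\mathbb{E}[T]$---is unnecessary and points the wrong way: the polymer is not terminal until every insertion on the path has occurred, so $T$ deterministically dominates the path completion time and $\mathbb{E}[T] \geq \sum_i 1/\lambda_i$ follows immediately from linearity of expectation.
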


\begin{proof}
The proof approach is to prove a lower bound on the expected time to carry out an insertion sequence of length $\Omega(\log{n})$ involving (by Lemma~\ref{lem:usable-insertion-sites-ub}), $\Omega(\log{n})$ distinct monomer types.
This is converted into a minimization problem for the expected time, whose optimal solutions shown algebraically to be $\Omega(\log^{5/3}(n))$.

\textbf{A long insertion sequence.}
Since each insertion only increases the number of insertion sites by one, the system must carry out an insertion sequence of length at least $\log_2{n}$ when constructing the polymer.
No insertion site appears twice in this sequence, since otherwise the system (non-deterministically) constructs polymers of arbitrary length.

Suppose, for the sake of contradiction, that an insertion site in the sequence accepts monomer types $m_1$ and $m_2$, and inserts $m_1$ into some polymer.
Then all polymers constructed by the system without $m_1$ and, seperately, the system without $m_1$ are constructed by the system and each has polymers not constructed by the other.
So the system cannot deterministically construct a polymer, a contradiction, and so no insertion site in the sequence accepts more than one monomer type.

Thus the $\log_2{n}$ (or more) distinct insertion sites appearing in the insertion sequence each accept a unique monomer type.
The remainder of the proof is develop a lower bound for the total expected time of the insertions in this sequence.

\textbf{An optimization problem.}
By linearity of expectation, the total expected time of the insertions is equal to the sum of the expected time for each insertion.
Because each insertion site accepts a unique monomer type, the expected time to carry out the insertion is equal to the reciprocal of concentration of this type.
Let $k$ be the number of monomer types inserted into the sites in the subsequence.
Let $c_1, c_2, \dots, c_k$ be the sums of the concentrations of these types,
and $x_1, x_2, \dots, x_k$ be the number of times a monomer from each part is inserted during the subsequence.
Then the total expected time for all of the insertions in the subsequence is $\sum_{i=1}^{k}x_i/c_i$.
Moreover, these variables are subject to the following constraints:

\begin{enumerate} \itemsep5pt
\item $\sum_{i=1}^{k}x_i \geq \log_2{n}/2$ (total number of insertions is at least $\log_2{n}/2$).
\item $\sum_{i=1}^{k}c_i \leq 1$ (total concentration is at most 1).
\item $k \geq \log^{2/3}(n)/4$ (monomer types is at least $\log^{2/3}(n)/4$, Lemma~\ref{lem:usable-insertion-sites-ub}).
\end{enumerate}

\textbf{Minimizing expected time.}
Consider minimizing the total expected time subject to these constraints, starting with proving that $x_i/c_i = x_j/c_j$ for all $1 \leq i, j \leq k$.
That is, that the ratio of the number of times a monomer type is inserted in the subsequence to the type's concentration is equal for all types.
Assume, without loss of generality, that $x_i/c_i > x_j/c_j$ and $c_i, c_j > 0$.
Then it can be shown algebraically that the following two statements hold:

\begin{enumerate}
\item If $c_j \geq c_i$, then for sufficiently small $\varepsilon > 0$, $\frac{x_i}{c_i} + \frac{x_j}{c_j} > \frac{x_i}{c_i + \varepsilon} + \frac{x_j}{c_j - \varepsilon}$.
\item If $c_j < c_i$, then for sufficiently small $\varepsilon > 0$, $\frac{x_i}{c_i} + \frac{x_j}{c_j} > \frac{x_i}{c_i - \varepsilon} + \frac{x_j}{c_j + \varepsilon}$.
\end{enumerate}

Since the ratios of every pair of monomer types are equal,

$$\frac{c_i}{1} \leq \frac{c_i}{\sum_{i=1}^{k}{c_i}} = \frac{x_i}{\sum_{i=1}^{k}{x_i}} \leq \frac{x_i}{\log{n}}$$

So $\log{n} \leq x_i/c_i$ and $k\log{n} \leq \sum_{i=1}^{k}x_i/c_i$.
By Lemma~\ref{lem:usable-insertion-sites-ub}, since the insertion subsequence has length $\log(n)/2$ and no repeated insertion sites, $k \geq \log^{2/3}(n)/4$.
So the total expected time is $k\log{n} \geq \log^{2/3}(n)/8$.
\end{proof}

\section*{Acknowledgments}

The authors thank anonymous reviewers for comments that improved the readability and correctness of the paper.

\bibliographystyle{plain}
\bibliography{insertion_primitive}

\begin{thebibliography}{10}

\bibitem{Adleman-2001a}
L.~Adleman, Q.~Cheng, A.~Goel, and M.-D. Huang.
\newblock Running time and program size for self-assembled squares.
\newblock In {\em Proceedings of 33rd ACM Symposium on Theory of Computing
  (STOC)}, 2001.

\bibitem{Chen-2014a}
H.-L. Chen, D.~Doty, D.~Holden, C.~Thachuk, D.~Woods, and C.-T. Yang.
\newblock Fast algorithmic self-assembly of simple shapes using random
  agitation.
\newblock In S.~Murata and S.~Kobayashi, editors, {\em DNA Computing and
  Molecular Programming}, volume 8727 of {\em LNCS}, pages 20--36. Springer
  Berlin Heidelberg, 2014.

\bibitem{Chen-2012a}
H.L. Chen and D.~Doty.
\newblock Parallelism and time in hierarchical self-assembly.
\newblock In {\em Proceedings of 23rd ACM-SIAM Symposium on Discrete Algorithms
  (SODA)}, pages 1163--1182, 2012.

\bibitem{Chen-2013a}
M.~Chen, D.~Xin, and D.~Woods.
\newblock Parallel computation using active self-assembly.
\newblock In D.~Soloveichik and B.~Yurke, editors, {\em DNA Computing and
  Molecular Programming}, volume 8141 of {\em LNCS}, pages 16--30. Springer
  Berlin Heidelberg, 2013.

\bibitem{Chernoff-1952}
H.~Chernoff.
\newblock A measure of asymptotic efficiency for tests of a hypothesis based on
  the sum of observations.
\newblock {\em The Annals of Mathematical Statistics}, 23(4):493--507, 1952.

\bibitem{Dabby-2013b}
N.~Dabby.
\newblock {\em Synthetic molecular machines for active self-assembly :
  prototype algorithms, designs, and experimental study}.
\newblock PhD thesis, Caltech, 2013.

\bibitem{Dabby-2013a}
N.~Dabby and H.-L. Chen.
\newblock Active self-assembly of simple units using an insertion primitive.
\newblock In {\em Proceedings of 24th ACM-SIAM Symposium on Discrete Algorithms
  (SODA)}, pages 1526--1536, 2013.

\bibitem{Doty-2012b}
D.~Doty, J.~H. Lutz, M.~J. Patitz, R.~T. Schweller, S.~M. Summers, and
  D.~Woods.
\newblock The tile assembly model is intrinsically universal.
\newblock In {\em Proceedings of 53rd IEEE Symposium on Foundations of Computer
  Sciences (FOCS)}, pages 302--310, 2012.

\bibitem{Gautam-2013a}
V.~K. Gautam, P.~C. Haddow, and M.~Kuiper.
\newblock Reliable self-assembly by self-triggered activation of enveloped
  {DNA} tiles.
\newblock In A.-H. Dediu, C.~Mart\'{i}n-Vide, B.~Truthe, and M.~A.
  Vega-Rodr\'{i}guez, editors, {\em Theory and Practice of Natural Computing},
  volume 8273 of {\em LNCS}, pages 68--79. Springer Berlin Heidelberg, 2013.

\bibitem{Hendricks-2013a}
J.~Hendricks, J.~E. Padilla, M.~J. Patitz, and T.~A. Rogers.
\newblock Signal transmission across tile assemblies: {3D} static tiles
  simulate active self-assembly by {2D} signal-passing tiles.
\newblock In D.~Soloveichik and B.~Yurke, editors, {\em DNA Computing and
  Molecular Programming}, volume 8141 of {\em LNCS}, pages 90--104. Springer
  Berlin Heidelberg, 2013.

\bibitem{Jonoska-2014a}
N.~Jonoska and D.~Karpenko.
\newblock Active tile self-assembly, part 1: universality at temperature 1.
\newblock {\em International Journal of Foundations of Computer Science},
  25(2):141--163, 2014.

\bibitem{Jonoska-2014b}
N.~Jonoska and D.~Karpenko.
\newblock Active tile self-assembly, part 2: self-similar structures and
  structural recursion.
\newblock {\em International Journal of Foundations of Computer Science},
  25(2):165--194, 2014.

\bibitem{Keenan-2013b}
A.~Keenan, R.~Schweller, and X.~Zhong.
\newblock Exponential replication of patterns in the signal tile assembly
  model.
\newblock In D.~Soloveichik and B.~Yurke, editors, {\em DNA Computing and
  Molecular Programming}, volume 8141 of {\em LNCS}, pages 118--132. Springer
  Berlin Heidelberg, 2013.

\bibitem{Klavins-2004b}
E.~Klavins.
\newblock Universal self-replication using graph grammars.
\newblock In {\em Proceedings of International Conference on {MEMS}, {NANO},
  and Smart Systems}, pages 198--204, 2004.

\bibitem{Klavins-2004a}
E.~Klavins, R.~Ghrist, and D.~Lipsky.
\newblock Graph grammars for self assembling robotic systems.
\newblock In {\em Proceedings of the International Conference on Robotics and
  Automation (ICRA)}, volume~5, pages 5293--5300, 2004.

\bibitem{Majumder-2008a}
U.~Majumder, T.~H. LaBean, and J.~H. Reif.
\newblock Activatable tiles: Compact, robust programmable assembly and other
  applications.
\newblock In M.~H. Garzon and H.~Yan, editors, {\em DNA Computing and Molecular
  Programming}, volume 4848 of {\em LNCS}, pages 15--25. Springer Berlin
  Heidelberg, 2008.

\bibitem{Malchik-2014a}
C.~Malchik and A.~Winslow.
\newblock Tight bounds for active self-assembly using an insertion primitive.
\newblock In {\em Proceedings of 22nd European Symposium on Algorithms (ESA)},
  pages 677--688, 2014.

\bibitem{Padilla-2012a}
J.~E. Padilla, W.~Liu, and N.~C. Seeman.
\newblock Hierarchical self assembly of patterns from the robinson tilings:
  {DNA} tile design in an enhanced tile assembly model.
\newblock {\em Natural Computing}, 11(2):323--338, 2012.

\bibitem{Padilla-2014a}
J.~E. Padilla, M.~J. Patitz, R.~T. Schweller, N.~C. Seeman, S.~M. Summers, and
  X.~Zhong.
\newblock Asynchronous signal passing for tile self-assembly: fuel efficient
  computation and efficient assembly of shapes.
\newblock {\em International Journal of Foundations of Computer Science},
  25(4):459--488, 2014.

\bibitem{Rothemund-2000a}
P.~W.~K. Rothemund and E.~Winfree.
\newblock The program-size complexity of self-assembled squares (extended
  abstract).
\newblock In {\em Proceedings of 32nd ACM Symposium on Theory of Computing
  (STOC)}, pages 459--468, 2000.

\bibitem{Soloveichik-2007a}
D.~Soloveichik and E.~Winfree.
\newblock Complexity of self-assembled shapes.
\newblock {\em {SIAM} Journal on Computing}, 36(6):1544--1569, 2007.

\bibitem{Winfree-1998a}
E.~Winfree.
\newblock {\em Algorithmic Self-Assembly of {DNA}}.
\newblock PhD thesis, Caltech, 1998.

\bibitem{Woods-2013b}
D.~Woods, H.-L. Chen, S.~Goodfriend, N.~Dabby, E.~Winfree, and P.~Yin.
\newblock Active self-assembly of algorithmic shapes and patterns in
  polylogarithmic time.
\newblock In {\em Proceedings of 4th Conference on Innovations in Theoretical
  Compuer Science (ITCS)}, pages 353--354, 2013.

\bibitem{Woods-2013a}
D.~Woods, H.-L. Chen, S.~Goodfriend, N.~Dabby, E.~Winfree, and P.~Yin.
\newblock Active self-assembly of algorithmic shapes and patterns in
  polylogarithmic time.
\newblock Technical report, arXiv, 2013.

\end{thebibliography}

\end{document}